\newcommand{\del}{{\partial}}
\theoremstyle{plain}
\newtheorem{prop}{{Proposition}}
\newtheorem{cor}{{Corollary}}
\newtheorem{lem}{{Lemma}}
\theoremstyle{definition}
\newtheorem*{defn}{{Definition}}
\begin{document}
\title[On blow-up sol'ns of the Jang equation in spherical symmetry]{On blow-up solutions of the Jang equation in spherical symmetry}
\author{Catherine Williams} 
\address{Department of Mathematics, Stanford University, Stanford, CA 94305}
\email{cathwill@math.stanford.edu}      
\date{\today}          

\begin{abstract}
We prove some related results concerning blow-up solutions for the Jang equation.
First: it has been shown that, given an outermost marginally outer trapped surface (MOTS) $\Sigma$, there exists a solution to Jang's equation which blows up at $\Sigma$. 
Here we show that in addition, large classes of spherically symmetric initial data have solutions to the Jang equation which blow up at non-outermost MOTSs, i.e.\ MOTSs which lie strictly inside of other MOTS, and even inside of strictly outer trapped surfaces. 
Unlike for outermost MOTSs, however, we show that there do not \textit{always} exist blow-up solutions for inner MOTSs, even in spherical symmetry.
Secondly, an unpublished result of R.\ Schoen, whose proof we include here, says that in the time-symmetric case, any MOTS  corresponding to a  blow-up solution for Jang's equation must be outer-area-minimizing, i.e.\ cannot be contained in a surface of strictly smaller area.
The statement is false without the assumption of time-symmetry, however; we construct an explicit spherically symmetric data set providing a counterexample for the general case.
\end{abstract}

\maketitle

\section{Introduction}

The aim of this paper is to address some questions concerning the relationship between non-outermost marginally outer trapped surfaces and blow-up solutions for Jang's equation.
These questions were largely motivated by an interest in the behavior of marginally outer trapped tubes and in hopes of finding a new tool with which to study them.

Our starting point is an initial data set $(\mathcal{M}, g, K)$ for the Einstein equations, where $\mathcal{M}$ is a Riemannian 3-manifold with metric $g$ and $K$ is a symmetric covariant 2-tensor on $\mathcal{M}$.
Alternately but equivalently, we could begin with a given spacetime $\mathcal{N}^4$ and consider a spacelike slice $\mathcal{M}^3$ with induced metric $g$ and second fundamental form $K$.
We do not make use of the Einstein equations \textit{per se} except via the dominant energy condition.
The initial data sets of interest here are those which are asymptotically flat ---
that is, we shall assume that, outside of a compact set, $\mathcal{M}$ is diffeomorphic to a finite number of copies of $\mathbb{R}^3$ minus a ball, and furthermore that with respect to the pulled-back Cartesian coordinates from $\mathbb{R}^3$, $g$ tends to the (flat) Euclidean metric and $K$ tends to $0$ as $|x| \rightarrow \infty$ on each copy.
(For this definition to be truly meaningful, one should specify rates of decay for the components of $g$ and $K$, and we address this in the next section.)
Without any loss of generality, we restrict our attention henceforth to manifolds with a single asymptotically  flat end.
Given such a manifold, we consider vectors to be outward-directed if they point towards this asymptotically flat end.

For a given spacelike surface $\Sigma \subset \mathcal{M}$, the \textit{inner} and \textit{outer expansions} on $\Sigma$, denoted by $\theta^\pm$,  are defined by
\[ \theta^\pm = \text{tr}_g K - K(\nu, \nu) \pm H, \]
where $\nu$ is the outward-directed unit normal along $\Sigma$ and $H$ is the mean curvature of $\Sigma$ in $\mathcal{M}$ with respect to $\nu$.
A \textit{marginally outer trapped surface} (MOTS) is a closed, spacelike surface $\Sigma \subset \mathcal{M}$ whose outer expansion $\theta^+$ vanishes at every point.
Along with their cousins \textit{marginally inner trapped surfaces} (MITS), whose inner expansion $\theta^-$ vanishes at every point, such surfaces are sometimes called \textit{apparent horizons}.

The Jang equation is a quasi-linear elliptic partial differential equation for a function $f$ on $\mathcal{M}$.
The most geometric way of expressing it, as in  \cite{SY81}, is to consider the graph of a function $f \in C^2(\mathcal{M})$ as a hypersurface of the product manifold $(\mathcal{M} \times \mathbb{R}, g + dt^2)$. 
The Jang equation is then
\begin{equation} \mathcal{J}[f] = \mathcal{H}[f] - \mathcal{P}[f] = 0, \label{Jang1} \end{equation}
where $\mathcal{H}[f]$ denotes the mean curvature of $\text{graph}f \subset \mathcal{M}\times\mathbb{R}$ computed with respect to its downward-pointing unit normal, and $\mathcal{P}[f] = \text{tr}_{\text{graph}f} \overline{K}$, where $\text{tr}_{\text{graph}f}$ denotes the trace with respect to the induced metric on ${\text{graph}f}$ and $\overline{K}$ is the symmetric 2-tensor on $\mathcal{M}\times \mathbb{R}$ obtained by extending $K$ trivially along the $\mathbb{R}$-factor.

In \cite{SY81}, Schoen and Yau showed that the existence of one or more apparent horizons in an asymptotically flat initial data set $(\mathcal{M}, g, K)$ is the only obstruction to the existence of an asymptotically decaying solution to \eqref{Jang1} on $\mathcal{M}$.
In particular, given an apparent horizon $\Sigma \subset \mathcal{M}$, a solution to \eqref{Jang1} on the region exterior to $\Sigma$ may blow up (or down) at $\Sigma$, so that the corresponding component of $\text{graph}{f} \subset \mathcal{M} \times \mathbb{R}$ has an asymptotically cylindrical end which approaches $\Sigma \times \mathbb{R}$.
Motivated by this result, others have since used the Jang equation to prove the existence of apparent horizons in various settings, notably in  \cite{SY83}, \cite{Y01}, \cite{E} and \cite{AM09}.
Furthermore, in \cite{Metz}, Metzger shows that if $\Sigma$ is an outermost MOTS, then in fact there \textit{must} exist a solution to \eqref{Jang1} which blows up at $\Sigma$ (and only at $\Sigma$, provided it also does not lie inside of any MITSs).

Such results suggest that finding blow-up solutions for the Jang equation could be a useful technique for locating MOTSs in initial data sets and even tracking their evolution over time.
For instance, by identifying the corresponding blow-up solution on each slice $\mathcal{M}_t = \{ t \} \times \mathcal{M}$ of a spacetime $\mathcal{N} = (-T, T) \times \mathcal{M}$, one could follow the motion of the outermost MOTSs with respect to $t$.
Indeed, a useful notion in this context is that of a \textit{marginally outer trapped tube} (MOTT), a hypersurface $\mathcal{H}$ of a spacetime which is foliated by MOTSs.
(MOTTs are generalizations of \textit{marginally trapped tubes} (MTTs), which are called \textit{dynamical} or \textit{isolated horizons} if they are spacelike or null, respectively, and are sometimes considered to be reasonable models of black hole boundaries; see, e.g., \cite{A}.)
The definition of a MOTT makes sense without reference to a background foliation of the spacetime by spacelike slices, but given such a slicing, a MOTT $\mathcal{H}$ is said to be adapted to it if each of its foliating MOTSs is contained in the intersection of $\mathcal{H}$ with one of the spacelike slices.
One might then hope to locate whole (adapted) MOTTs by tracking blow-up solutions to Jang's equation on successive slices of the spacetime.

A complication with such a program arises, however, in that the intersection of a MOTT $\mathcal{H}$ with a given slice $\mathcal{M}_t$ may have both outermost and non-outermost (inner) components. 
This is indeed the case if the MOTT ``weaves through time," as schematically depicted in Figure \ref{jumps}a, or if two separate MOTTs ``coalesce" into one, as indicated in Figure \ref{jumps}b; in both cases, the outermost MOTS jumps abruptly outward at time $t_1$. 
See \cite{AMMS} for a detailed analysis of such jumps.

%%%%%%%%%%%%%%%%%%%%%%%%%%%%%%%%%%%%%%%%%%%%%
%
%		figure
%
%%%%%%%%%%%%%%%%%%%%%%%%%%%%%%%%%%%%%%%%%%%%%
%
\begin{figure}[hbtp]
\begin{center}
{
\psfrag{t0}{\large{$t_0$}}
\psfrag{t1}{\large{$t_1$}}
\psfrag{t2}{\large{$t_2$}}
\psfrag{H}{\large{}}
\psfrag{infty}{\large{toward asymptotically flat end}}
\psfrag{sig1}{\large{}}
\psfrag{sig2}{\large{}}
\psfrag{sigtw}{\large{}}
\psfrag{???}{\large{ ?}}
\psfrag{(a)}{\large{(a)}}
\psfrag{(b)}{\large{(b)}}
\resizebox{5in}{!}{\includegraphics{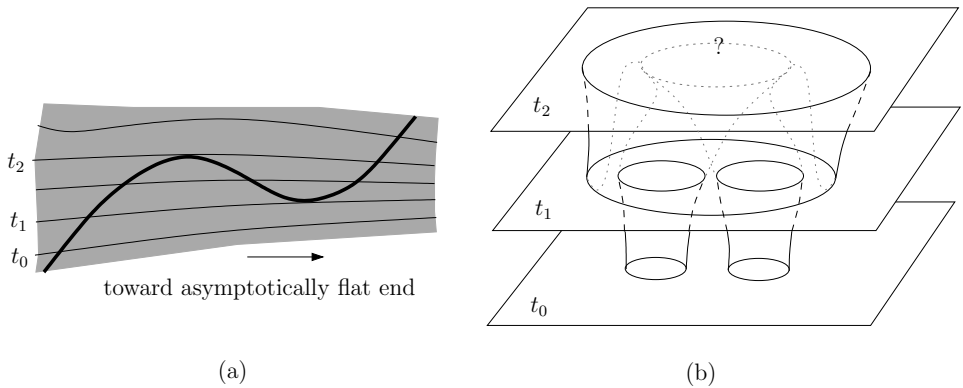}}
}
\vspace*{-.2in}
\caption{(a) 
A MOTT (thick line) in a spacetime foliated by spacelike hypersurfaces (thin lines). 
Although the MOTT is itself smooth, the outermost MOTS jumps outward at time $t_1$, and for $t_1 \leq t < t_2$, there are both outer and inner MOTSs on each slice.  
(b) 
Two MOTTs coalescing into one, again in a spacetime foliated by spacelike hypersurfaces.
For $t < t_1$, the outermost MOTS has two connected components, but at $t_1$ a new outermost MOTS appears, enclosing the 
previous two.  
Whether the two inner MOTSs persist for $t > t_1$ and connect up smoothly with the outer one to form a single immersed MOTT, as suggested by the dotted gray lines, is unknown.}
\end{center}\label{jumps}
\end{figure}
%
%%%%%%%%%%%%%%%%%%%%%%%%%%%%%%%%%%%%%%%%%%%%%
%
%		end figure
%
%%%%%%%%%%%%%%%%%%%%%%%%%%%%%%%%%%%%%%%%%%%%% 

In Section \ref{setup}, we cast the setup from \cite{SY81} into a spherically symmetric setting, in which the Jang equation becomes an ODE and hence much simpler to analyze than in general, and in Section \ref{ex}, we complete the picture of how Jang equation blow-up solutions correlate with the type of MOTT behavior depicted in Figure 1a (in spherical  symmetry).
In particular, at $t_0$, and indeed for $t < t_1$, there exists a single blow-up solution on each slice, corresponding to the sole MOTS (Proposition 1).  
At time $t_1$, a second blow-up solution appears corresponding to the new outer MOTS (Proposition 1 again), while the now inner MOTS continues to have its own separate blow-up solution (Proposition 2); outside of the outer MOTS where both solutions are well-defined, the former is larger than the latter.
At least for a short time after $t_1$, the inner blow-up solution continues to exist (Corollary 1), but it does not necessarily persist until time $t_2$ (Corollary 2).
% REFER TO FIGURE?
In short, we arrive at the least satisfying picture: a blow-up solution may or may not identify a MOTS which is outermost, and an inner MOTS may or may not have a blow-up solution corresponding to it.
Such a negative result suggests that the inner portion of the MOTT depicted in Figure 1b is also unlikely to be accessible to study via the Jang equation. 

The results in Section \ref{area} concern a different possible strategy for characterizing those MOTSs which correspond to blow-up solutions, namely whether or not they minimize area among those surfaces containing them.
This approach is motivated by a result due to Rick Schoen (Proposition 3) which says that in the general time-symmetric case (not necessarily spherically symmetric), any MOTS corresponding to a blow-up solution for Jang's equation must be outer-area-minimizing --- that is, it cannot be contained in a surface of strictly smaller area.
We find that for non-time-symmetric data, however, that characterization need not hold (Proposition 4): we construct a spherically symmetric initial data set satisfying the dominant energy condition which contains a MOTS with a corresponding blow-up solution, even though another MOTS of strictly smaller area contains it.
So in general, blow-up solutions do not correspond to outer-area-minimizing MOTSs only.

%%%%%%%%%%%%%%%%%%%%%%%%%%%%%%%%%%%%%%%%%%%%%%%%%%%%%%%%%%%%%%%%%%%%%%%%%%%%%%%%%%%%%%%%%%%%%%%%%%%%%%%%%%%%%%%%%%%%%%%%%%%%%%%%%
%																%
%																%
%						SET-UP & DEFINITIONS								%
%																%
%																%
%%%%%%%%%%%%%%%%%%%%%%%%%%%%%%%%%%%%%%%%%%%%%%%%%%%%%%%%%%%%%%%%%%%%%%%%%%%%%%%%%%%%%%%%%%%%%%%%%%%%%%%%%%%%%%%%%%%%%%%%%%%%%%%%%

\section{The Jang equation in spherical symmetry}\label{setup}

Our initial set-up and notation largely follow that of \cite{MO'M}, except where noted.
Suppose we have a spherically symmetric spacelike slice $\mathcal{M}$ of a given spacetime.  
The spherical symmetry implies that the Riemannian metric $g$ on $\mathcal{M}$ may be written locally with respect to a radial coordinate $r$ as
\[ g = \varphi(r) dr^2 + R^2(r) ds^2, \]
and assuming that the spacetime is spherically symmetric near $\mathcal{M}$, the second fundamental form $K$ may be written as
\[ K = \varphi K_\ell(r) dr^2 + R^2 K_R (r) ds^2, \]
where $R(r)$ is the area-radius of the spherically symmetric 2-spheres, $\varphi$, $K_\ell$, and $K_R$ are smooth functions of $r$, the function $\varphi > 0$, and $ds^2$ is the usual round metric on 2-spheres.
It will afford us some convenience in what follows to replace the radial coordinate $r$ with a coordinate $\ell$ which measures unit proper distance in the radial direction.
That is, we set
\[ \ell(r) = \int^r \sqrt{\varphi(r_\ast)}dr_\ast  \quad \left(\longleftrightarrow \quad  \del_\ell = \frac{1}{\sqrt{\varphi}}\del_r \right).\]
Then with respect to the coordinate $\ell$, $g$ and $K$ may be written 
\[ g = d\ell^2 + R^2(\ell) ds^2 \]
and
\[ K = K_\ell(\ell) d\ell^2 + R^2 K_R (\ell) ds^2. \]
Note that
\[ \text{tr}K(\ell) = K_\ell(\ell) + 2K_R(\ell). \]
We further assume that $\mathcal{M}$ is asymptotically flat (with a single end). 
Thus $\ell$ tends to infinity along the asymptotically flat end --- 
indeed, without loss of generality we may set $\mathcal{M} = [\ell_0, \infty)$ ---
and $R(\ell) \rightarrow \ell$ and $K_\ell$, $K_R \rightarrow 0$ as $\ell \rightarrow \infty$; the exact rates are discussed below.

Ordinarily, the Jang equation is a second-order PDE for $f$, but
if we require $\text{graph}f$ to be spherically symmetric, then $f = f(\ell)$ and \eqref{Jang1} reduces to the first-order ODE\footnote{Note that, while our derivation of \eqref{Jang} closely follows that of Malec and {\'O} Murchadha in \cite{MO'M}, we are using the version of \eqref{Jang1} with a minus sign, as in \cite{SY81}, while they use a plus; the version of \eqref{Jang} appearing in \cite{MO'M} therefore has a plus sign in front of the $R^2$ term.}
\begin{equation} \del_\ell (R^2 k) - R^2 \left( \text{tr}K - K_\ell k^2 \right) = 0 
\label{Jang}
\end{equation}
where
\[ k = \frac{\del_\ell f}{\sqrt{1 + (\del_\ell f)^2}} \quad \left(\longleftrightarrow \quad \del_\ell f = \frac{k}{\sqrt{1-k^2}} \right). \]
Notice that only solutions with $-1 < k (\ell) < 1$  correspond to ``physical" (real-valued) solutions $f$, and that $|\del_\ell f|$ blows up precisely when $|k| = 1$.
One readily computes that the future inner and outer expansions of the round 2-sphere corresponding to $\ell$ are 
\begin{equation} \theta^{\pm}(\ell) = 2K_R(\ell) \pm H(\ell), \label{theta}\end{equation}
where
\begin{equation} H(\ell) = \frac{2(\del_\ell R)}{R}(\ell) \label{H} \end{equation}
 is the mean curvature of the round 2-sphere corresponding to $\ell$, so we may rewrite equation \eqref{Jang} in several other forms, each of which will be useful in a different context:
\begin{eqnarray}
\del_\ell k & = &  -K_\ell k^2 - H k + \text{tr}K \label{J1} \\
\del_\ell k & = &  K_\ell (1 - k^2) - H (1 + k) + \theta^+ \label{J2} \\
\del_\ell k & = &  K_\ell (1 - k^2) + H (1 - k) + \theta^-. \label{J3}
\end{eqnarray} 
Although we do not make use  of this fact, equation \eqref{J1} in particular makes it clear that this reduced version of Jang's equation is a (scalar) Riccati equation.

%%%%%%%%%%%%%%%%%%%%%%%%%%%%%%%%%%%%%%%%%%%%%%%%%%%%%%%%%%%%%%%%%%%%%%%%%%%%%%%%%%%%%%%%%%%%%%%%%%%%%%%%%%%%%%%%%%%%%%%%%%%%%%%%%
%																%
%																%
%					BLOW-UP SOL'NS FOR OUTERMOST MOTS							%
%																%
%																%
%%%%%%%%%%%%%%%%%%%%%%%%%%%%%%%%%%%%%%%%%%%%%%%%%%%%%%%%%%%%%%%%%%%%%%%%%%%%%%%%%%%%%%%%%%%%%%%%%%%%%%%%%%%%%%%%%%%%%%%%%%%%%%%%%

\section{Existence of blow-up solutions}\label{ex}

In spherically symmetry, a MOTS corresponds to a point $\ell$ at which $\theta^+(\ell) = 0$, and the following definitions are tailored to this setting. 

\begin{defn} Given a surface $\ell_\ast$ in $\mathcal{M}$, we say a solution $k(\ell) \in C^\infty([\ell_\ast, \infty))$ is a \textit{blow-up solution for} $\ell_\ast$ if $k(\ell_\ast) = -1$, $\del_\ell k (\ell_\ast) = 0$, 
$|k(\ell)| < 1$ for all $\ell > \ell_\ast$, and $k(\ell) \rightarrow 0$ as $\ell \rightarrow \infty$.
(If $\ell_\ast$  is a MOTS, then $\del_\ell k (\ell_\ast) = 0$ is automatically satisfied if $k(\ell_\ast) = -1$ by equation \eqref{J2}.)
\end{defn}
\begin{defn} A MOTS $\ell_\ast$ is \textit{locally outermost} if there exists an $\varepsilon > 0$ such that $\theta^+(\ell) > 0$ for all $\ell \in (\ell_\ast, \ell_\ast + \varepsilon)$, and it is \textit{(globally) outermost} if $\theta^+(\ell) > 0$ for all $\ell > \ell_\ast$.
We shall call a locally outermost MOTS $\ell_\ast$ \textit{finitely stable} if there exists $m \geq 1$ such that $\del^m_\ell \theta^+ (\ell_\ast) > 0$ --- that is, if $\theta^+$ has positive $m$-th variation at $\ell_\ast$ for some finite $m$.  
If $m = 1$, the MOTS is \textit{strictly stable} (cf. \cite{AMS}).
\end{defn}

In order to align with \cite{SY81}, we restrict our attention to initial data $(\mathcal{M} = [\ell_0, \infty), R, K_\ell, K_R)$ satisfying the decay conditions given there.
In particular, translating the rates given with respect to Cartesian coordinates in \cite{SY81} to our spherically symmetric setting, we shall require:
\begin{eqnarray}
\left| \frac{R^2}{\ell^2} -1 \right| & = & O(\ell^{-1}) \label{decay1}\\
\left| H - \frac{2}{\ell} \right| & = & O(\ell^{-2}) \label{decay2}\\
\left| \del_\ell H + \frac{2}{\ell^2} \right| & = & O(\ell^{-3}) \label{decay3}\\
\left| K_\ell \right|, \,\left| K_R \right| & = & O(\ell^{-2})  \label{decay4}\\
\left| \del_\ell K_{\ell} \right|, \,\left| \del_\ell K_{R} \right| & = & O(\ell^{-3}) \\
\left| \text{tr}K \right| & = & O(\ell^{-3}), \label{decay6}
\end{eqnarray}
where $H(\ell)$ is as in \eqref{H}. 
From the decay on  $H$, $K_\ell$, and $\text{tr}K$ in particular, we have the following useful lemma.
\begin{lem} If $(\mathcal{M}, R, K_\ell, K_R)$ is an initial data set satisfying decay conditions \eqref{decay1}-\eqref{decay6}, then any bounded solution $k\in C^1(\mathcal{M})$ to Jang's equation \eqref{J1} satisfies $k(\ell) \rightarrow 0$ as $\ell \rightarrow \infty$.  In particular, $k(\ell) = O(\ell^{-1})$. \label{lemma}
\end{lem}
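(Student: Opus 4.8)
The plan is to analyze the Riccati equation \eqref{J1} directly, using the decay hypotheses to show that a bounded solution is forced toward $0$. First I would record that, since $|K_\ell| = O(\ell^{-2})$, $H = 2/\ell + O(\ell^{-2})$, and $\text{tr}K = O(\ell^{-3})$, equation \eqref{J1} can be rewritten schematically as
\[
\del_\ell k = -\frac{2}{\ell}\,k + E(\ell,k),
\]
where the error term $E(\ell,k) = -K_\ell k^2 - (H - \tfrac{2}{\ell})k + \text{tr}K$ satisfies $|E(\ell,k)| = O(\ell^{-2})$ uniformly for $|k|$ bounded (using the assumed bound $|k| \le C$). The linear part has the integrating factor $\ell^{2}$, so I would multiply through by $\ell^{2}$ and integrate: for $\ell_1 \le \ell$,
\[
\ell^{2} k(\ell) = \ell_1^{2} k(\ell_1) + \int_{\ell_1}^{\ell} s^{2} E(s,k(s))\, ds.
\]
Since $|s^2 E(s,k(s))| = O(1)$ — in fact I expect it to be $O(1)$ but not obviously integrable, so a little care is needed here — this by itself only gives $k(\ell) = O(\ell^{-1})$ once one checks the growth of the integral is at most linear in $\ell$; I will come back to this point.

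More carefully: the genuinely delicate part is that $s^2 \text{tr}K = O(s^{-1})$ is integrable, $s^2 K_\ell k^2 = O(1)$ is \emph{not} obviously integrable, and $s^2(H - 2/s)k = O(1)$ is likewise not obviously integrable, so the crude bound on the integral is $O(\ell)$, yielding only $k = O(1)$ — circular. To break this, I would instead argue in two stages. Stage one: show $k(\ell) \to 0$. Suppose not; then there is $\delta > 0$ and a sequence $\ell_j \to \infty$ with $k(\ell_j) \ge \delta$ (the case $k(\ell_j) \le -\delta$ is symmetric). At such points, for $\ell$ large enough that the error term is smaller than $\tfrac{1}{2}\cdot\tfrac{2}{\ell}\delta$ in absolute value, \eqref{J1} gives $\del_\ell k < -\tfrac{\delta}{\ell} < 0$ whenever $k \ge \delta$, so $k$ is strictly decreasing there; a standard barrier/continuation argument (comparing with the solution of $\del_\ell u = -u/\ell$, i.e. $u = c/\ell$) then shows $k$ cannot return to level $\delta$ at a later large radius, contradicting the existence of the sequence $\ell_j$. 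Hence $k(\ell) \to 0$.

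Stage two: upgrade to the rate $k = O(\ell^{-1})$. Now that $k(\ell) \to 0$, pick $\ell_1$ large so that $|k(\ell)| \le \varepsilon$ for $\ell \ge \ell_1$, with $\varepsilon$ as small as desired. Then on $[\ell_1, \infty)$ the nonlinear term obeys $|K_\ell k^2| \le |K_\ell|\,\varepsilon\,|k| = O(\ell^{-2})\varepsilon|k|$ and $|(H-2/\ell)k| = O(\ell^{-2})|k|$, so \eqref{J1} becomes $\del_\ell k = -(2/\ell + O(\ell^{-2}))k + O(\ell^{-3})$; that is, $|\del_\ell(\ell^2 k)| = |\ell^2 \del_\ell k + 2\ell k| \le C\ell^{-1}|k| + C\ell^{-1} \le C\ell^{-1}$ once $|k|\le 1$, and this \emph{is} integrable, so $\ell^2 k(\ell)$ converges to a finite limit as $\ell \to \infty$, giving $k(\ell) = O(\ell^{-1})$. (One could iterate once more to pin down the limit, but $O(\ell^{-1})$ is all that is claimed.) The only subtlety I anticipate is making the barrier comparison in Stage one fully rigorous — in particular handling the possibility that $k$ oscillates — but the sign structure of the Riccati equation near $k = 0$ for large $\ell$ (the coefficient $-H \approx -2/\ell$ of the linear term is negative, so $k=0$ is attracting) makes this routine; the decay hypotheses \eqref{decay2}, \eqref{decay4}, \eqref{decay6} are exactly what is needed to dominate the nonlinearity and the inhomogeneity.
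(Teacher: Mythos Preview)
Your first paragraph is essentially the paper's proof, and it is already complete --- you talked yourself out of it by a slip in the arithmetic. After multiplying by the integrating factor $\ell^{2}$ you obtain
\[
\ell^{2}k(\ell) = \ell_{1}^{2}k(\ell_{1}) + \int_{\ell_{1}}^{\ell} s^{2}E(s,k(s))\,ds,
\]
and since $|s^{2}E(s,k(s))|\le C$ (using only the hypothesis that $k$ is bounded), the integral has modulus at most $C(\ell-\ell_{1})\le C\ell$. This gives $|\ell^{2}k(\ell)|\le C\ell + |\ell_{1}^{2}k(\ell_{1})|$, hence $|k(\ell)|\le C/\ell + \mathrm{const}/\ell^{2} = O(\ell^{-1})$. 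There is no circularity: boundedness of $k$ is the \emph{hypothesis}, and it yields the $O(\ell^{-1})$ conclusion directly. Your sentence ``the crude bound on the integral is $O(\ell)$, yielding only $k = O(1)$'' is the error --- you forgot to divide by $\ell^{2}$.

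This is exactly how the paper argues: it bounds $|\del_{\ell}k + Hk| = O(\ell^{-2})$, then $|\del_{\ell}k + \tfrac{2}{\ell}k| = O(\ell^{-2})$, then $|\del_{\ell}(\ell^{2}k)|\le C$, then integrates and divides by $\ell^{2}$. One step; no barrier argument, no iteration.

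Your two-stage workaround can be made to go through, but it is unnecessary, and as written Stage~2 contains a small error: from $\del_{\ell}k = -(2/\ell + O(\ell^{-2}))k + O(\ell^{-3})$ one gets $|\del_{\ell}(\ell^{2}k)| \le C|k| + C\ell^{-1}$, not $C\ell^{-1}|k| + C\ell^{-1}$; the first term is bounded but not integrable, so $\ell^{2}k$ need not converge to a finite limit. (You still recover $k = O(\ell^{-1})$ from this bound, but by the same division-by-$\ell^{2}$ step that already finishes the direct argument.)
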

\begin{proof}
Since $|K_\ell(\ell)| = O(\ell^{-2})$ and $|\text{tr}K(\ell)| =  O(\ell^{-3})$ (\eqref{decay4} and \eqref{decay6}, respectively), and we know that $|k(\ell)|$ is uniformly bounded on $\mathcal{M}$, equation \eqref{J1} yields
\[ \left| \del_\ell k + Hk \right|  = O(\ell^{-2}). \]
Since $|H - \frac{2}{\ell}| = O(\ell^{-2})$,
applying the triangle inequality implies
\[ \left| \del_\ell k + \frac{2}{\ell} \cdot  k \right| = O(\ell^{-2}), \]
and multiplying both sides by $\ell^2$ then yields
 \[ \left| \ell^2 (\del_\ell k) + 2\ell k \right| =  \left| \del_\ell(\ell^2 k ) \right|  \leq C \]
for some constant $C >0$.
Integrating both sides on an interval $[L, \ell]$, we have
\[ \left| \ell^2 k(\ell) - L^2 k(L) \right | \leq C(\ell - L)
\leq C\ell, \]
which in turn implies that
\[  \left| k(\ell) \right| \leq \frac{C}{\ell} + \frac{L^2 k(L)}{\ell^2} = O(\ell^{-1}). \]
\end{proof}

Henceforth we assume that our initial data $(\mathcal{M} = [\ell_0, \infty), R, K_\ell, K_R)$ satisfy the decay conditions \eqref{decay1}-\eqref{decay6} above.
We further require that the data satisfy \textit{either} $\theta^- < 0$ everywhere, \textit{or} $\text{tr}K  < 0$ everywhere.
With these assumptions in place, we have both old and new existence results.

\begin{prop} A blow-up solution exists for any outermost, finitely stable MOTS $\ell_\ast$.
\end{prop}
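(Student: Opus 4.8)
The plan is to obtain the blow-up solution as the unique solution of the initial value problem for \eqref{J1} with $k(\ell_\ast) = -1$, and then to verify, one by one, the four conditions in the definition of a blow-up solution. Since the right-hand side of \eqref{J1} is a polynomial in $k$ with smooth coefficients it is locally Lipschitz, so Picard--Lindel\"of gives a unique solution $k \in C^\infty$ on a maximal interval $[\ell_\ast, \ell_{\max})$ with $\ell_{\max} > \ell_\ast$; and since $\ell_\ast$ is a MOTS, evaluating \eqref{J2} at $\ell_\ast$ shows $\del_\ell k(\ell_\ast) = 0$ automatically. Thus $k$ already has the required regularity together with the two conditions at $\ell_\ast$, and it remains to prove: (i) $k(\ell) > -1$ for all $\ell \in (\ell_\ast, \ell_{\max})$; (ii) $k(\ell) < 1$ on the same interval; (iii) $\ell_{\max} = \infty$; and (iv) $k(\ell) \to 0$ as $\ell \to \infty$.

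For (i) I would first use finite stability to pin down the local behaviour of $k$ at $\ell_\ast$. Put $u = 1 + k$, so that \eqref{J2} becomes $\del_\ell u = \bigl(K_\ell(2-u) - H\bigr)u + \theta^+$. Let $m \ge 1$ be the order of vanishing of $\theta^+$ at $\ell_\ast$; this is finite by finite stability, and $\del_\ell^m\theta^+(\ell_\ast) > 0$ because $\ell_\ast$ is locally outermost (were the first non-zero derivative negative, $\theta^+$ would be negative just outside $\ell_\ast$). A short induction on the order of differentiation at $\ell_\ast$, using repeatedly that $u$ and its lower-order derivatives vanish there, shows that $\del_\ell^j k(\ell_\ast) = 0$ for $1 \le j \le m$ while $\del_\ell^{m+1}k(\ell_\ast) = \del_\ell^m\theta^+(\ell_\ast) > 0$; hence $k(\ell) = -1 + \tfrac{\del_\ell^m\theta^+(\ell_\ast)}{(m+1)!}(\ell - \ell_\ast)^{m+1} + o\bigl((\ell - \ell_\ast)^{m+1}\bigr)$, so $k$ lies strictly in $(-1,1)$ for $\ell$ just larger than $\ell_\ast$. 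To propagate $k > -1$ outward, observe that the constant function $0$ is a subsolution of the $u$-equation on all of $[\ell_\ast, \ell_{\max})$ because $\theta^+ \ge 0$ there (this is where I use that $\ell_\ast$ is globally outermost); ODE comparison then gives $u \ge 0$, i.e.\ $k \ge -1$, and if $k(\ell_1) = -1$ at some $\ell_1 > \ell_\ast$ then $\del_\ell u(\ell_1) = \theta^+(\ell_1) > 0$, forcing $u < 0$ just inside $\ell_1$ --- a contradiction. Hence $k > -1$ on $(\ell_\ast, \ell_{\max})$.

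For (ii) I would split on the standing dichotomy. If $\theta^- < 0$ everywhere, set $v = 1 - k$; then \eqref{J3} reads $\del_\ell v = -\bigl(K_\ell(2-v) + H\bigr)v - \theta^-$, the constant $0$ is again a subsolution (since $-\theta^- > 0$), and the same ``strictly increasing at a putative zero'' argument as in (i) yields $v > 0$, i.e.\ $k < 1$. If instead $\text{tr}K < 0$ everywhere, I would prove the stronger statement $k < 0$: wherever $k = 0$, \eqref{J1} gives $\del_\ell k = \text{tr}K < 0$, so a solution starting at $k(\ell_\ast) = -1$ can never cross $0$ from below, and hence $k < 0 < 1$. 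In either case, combined with (i) we obtain $|k| < 1$ on $(\ell_\ast, \ell_{\max})$; the right-hand side of \eqref{J1} is then bounded on every compact subinterval, so $k$ cannot escape in finite $\ell$, giving $\ell_{\max} = \infty$ and hence (iii). Finally (iv) follows at once from Lemma \ref{lemma}, since $k$ is a bounded $C^1$ solution of \eqref{J1} on $[\ell_\ast, \infty)$.

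The two barrier estimates and the convergence step are routine; the point requiring the most care is the behaviour right at $\ell_\ast$, where the solution of the initial value problem starts exactly on the barrier $k = -1$, and one must ensure it leaves into the physical interval $(-1,1)$ rather than dropping below. This is precisely where the hypothesis that $\ell_\ast$ be \emph{finitely stable} --- rather than merely locally outermost --- enters, and it foreshadows why the later propositions, which relax this hypothesis or abandon global outermost-ness, require a finer analysis.
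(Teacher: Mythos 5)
Your proposal is correct and takes essentially the same route as the paper: the recursive computation of derivatives of $k$ at $\ell_\ast$ using finite stability and outermost-ness to show the solution enters $(-1,1)$, first-crossing/barrier arguments via \eqref{J2} and \eqref{J3} (or \eqref{J1} in the $\text{tr}K<0$ case, where one likewise gets the stronger bound $k<0$), continuation to all of $[\ell_\ast,\infty)$ from $|k|<1$, and Lemma \ref{lemma} for the decay. The only cosmetic differences are your substitutions $u=1+k$, $v=1-k$ and the explicit appeal to ODE comparison, which the paper replaces by the equivalent infimum-of-the-bad-set argument combined with the sign of $\del_\ell k$ at a putative touching point.
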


\noindent
\textit{Remarks.}
As previously mentioned, this result is not new;
in particular, Metzger proved a more general existence theorem from which one can derive this one in \cite{Metz}, based largely on techniques of Schoen-Yau and the barrier argument from Andersson-Metzger (\cite{SY81} and \cite{AM09}, respectively).
The spherically symmetric case is also explicitly addressed in \cite{MO'M}, and some portions of the proof below are similar to arguments made there.
We include a full proof here primarily so that we can build on its methods in the proof of Proposition 2 (which \textit{is} new).

\begin{proof}  
By standard ODE theory it is clear that there exists a local solution $k(\ell)$ to Jang's equation in a neighborhood of $\ell_\ast$ satisfying the initial value condition $k(\ell_\ast) = -1$.  
Since $\theta^+(\ell_\ast) = 0$, plugging this value into \eqref{J2} yields $\del_\ell k(\ell_\ast) = 0$.
We need to show that this solution in fact exists on all of $[\ell_\ast, \infty)$, $|k(\ell)| < 1$ for all $\ell > \ell_\ast$, and $k(\ell) \rightarrow 0$ as $\ell \rightarrow \infty$.

Differentiating \eqref{J2} yields
\begin{equation} \del^2_\ell k = (\del_\ell K_{\ell})(1-k^2) - (\del_\ell H)(1+k) + \del_\ell\theta^+ - 2( \del_\ell k) k K_\ell - (\del_\ell k)H, \label{full2ndderiv}
\end{equation}
and evaluating at $\ell_\ast$ tells us that
\begin{equation} \del^2_\ell k(\ell_\ast) = \del_\ell \theta^+(\ell_\ast). \label{2ndderiv} \end{equation} 
Since $\ell_\ast$ is outermost, it is stable, i.e.\ $\del_\ell \theta^+(\ell_\ast) \geq 0$.
If it is in fact strictly stable, $\del_\ast \theta^+(\ell_\ast) > 0$, then \eqref{2ndderiv} implies that 
$-1 < k(\ell)$ for $\ell \in (\ell_\ast, \ell_\ast + \varepsilon)$, some small $\varepsilon > 0$.
If $\ell_\ast$ is stable but not strictly so, then $\del_\ell \theta^+(\ell_\ast) = 0 $; then since $\ell_\ast$ is outermost, its second variation must be non-negative, $\del^2_{\ell}\theta^+(\ell_\ast) \geq 0$, and if it too is zero, we can consider the third variation, and so on.
Recursing and using our hypothesis that $\ell_\ast$ is finitely stable, we find that $\del^{j}_{\ell} \theta^+(\ell_\ast) > 0$ while $\theta^+(\ell_\ast) = \del_\ell \theta^+(\ell_\ast) = \del^2_\ell \theta^+(\ell_\ast) = \cdots = \del^{j-1}_\ell \theta^+(\ell_\ast) = 0$ for some $j \geq 2$.
By taking $j-1$ derivatives of \eqref{J2} and evaluating each at $\ell_\ast$, we find that $\del_\ell^{j+1}k (\ell_\ast) > 0$ while $\del_\ell k (\ell_\ast) = \cdots = \del_\ell^{j}k (\ell_\ast) = 0$.
Thus we may again conclude that $-1 < k(\ell)$ for $\ell \in (\ell_\ast, \ell_\ast + \varepsilon)$, some small $\varepsilon > 0$.

Suppose $\mathcal{D} \subset \mathbb{R}$ is the maximal domain of definition of $k$, and let $\mathcal{A} = \{ \ell \in (\ell_\ast, \infty) \cap \mathcal{D} : k(\ell) \leq -1 \}$.
If $\mathcal{A} \neq \emptyset$, set $L = \inf \mathcal{A}$.
Then $k(L) = -1$, and $L > \ell_\ast$ since $-1 < k(\ell)$ for $\ell \in (\ell_\ast, \ell_\ast + \varepsilon)$.  
Plugging $L$ into \eqref{J2}, we have $\del_\ell k(L) = \theta^+(L)$, and thus since $\ell_\ast$ is the outermost MOTS,  we must have $\theta^+(L) = \del_\ell k(L) > 0$.
But $\del_\ell k(L) > 0$ and $k(L) = -1$ together imply that $k(\ell) < -1$ for $\ell< L$  near $L$, contradicting the definition of $L$. 
So $\mathcal{A} = \emptyset$, and $k(\ell) > -1$ for all $\ell > \ell_\ast$ in $\mathcal{D}$.

For the upper bound on $k(\ell)$, we have two cases.
If the data satisfies $\theta^- < 0$ everywhere, we define $\mathcal{B} = \{ \ell \in (\ell_\ast, \infty) \cap \mathcal{D} : k(\ell) \geq 1 \}$, and set $L^\prime = \inf \mathcal{B}$ if $\mathcal{B} \neq \emptyset$. Then $k(L^\prime) = 1$, and since $k(\ell_\ast) = -1$, $L^\prime > \ell_\ast$.
Plugging $L^\prime$ into \eqref{J3} yields $\del_\ell k(L^\prime) = \theta^-(L^\prime) < 0$.
But $\del_\ell k(L^\prime) < 0$ and $k(L^\prime) = 1$ together imply that  $k(\ell) > 1$ for $\ell< L^\prime$  near $L^\prime$, contradicting the definition of $L^\prime$. 
Thus $\mathcal{B} = \emptyset$ and $k(\ell) < 1$ for all $\ell \geq \ell_\ast$ in $\mathcal{D}$.

If the data satisfies $\text{tr}K < 0$ everywhere, we instead define  $\mathcal{B} = \{ \ell \in (\ell_\ast, \infty) \cap \mathcal{D} : k(\ell) \geq 0 \}$, and set $L^\prime = \inf \mathcal{B}$ if $\mathcal{B} \neq \emptyset$.
Then since $L^\prime > \ell_\ast$, $k(L^\prime) = 0$, and $\del_\ell k(L^\prime) = \text{tr}K(L^\prime) < 0$, we derive a contradiction to the definition of $L^\prime$ exactly like the previous one, namely that $k(\ell) > 0$ for $\ell < L^\prime$ near $L^\prime$.
So $\mathcal{B} = \emptyset$, and this time we have $k(\ell) < 0$ for all $\ell \geq \ell_\ast$ in $\mathcal{D}$.

In either of the two cases, we may conclude that $|k(\ell)| < 1$ for all $\ell_\ast < \ell \in \mathcal{D}$, so since the functions $R, K_R$, and $K_\ell$ are smooth on $[\ell_\ast, \infty)$, we must have $[\ell_\ast, \infty) \subset \mathcal{D}$.
That is, $k$ is defined out to infinity and satisfies $|k(\ell)| < 1$ for all $\ell > \ell_\ast$.
Lemma \ref{lemma} then implies that $k(\ell) \rightarrow 0$ as $\ell \rightarrow \infty$, as desired.
\end{proof}

%%%%%%%%%%%%%%%%%%%%%%%%%%%%%%%%%%%%%%%%%%%%%%%%%%%%%%%%%%%%%%%%%%%%%%%%%%%%%%%%%%%%%%%%%%%%%%%%%%%%%%%%%%%%%%%%%%%%%%%%%%%%%%%%%
%																%
%																%
%					BLOW-UP SOL'NS FOR INNER MOTS								%
%																%
%																%
%%%%%%%%%%%%%%%%%%%%%%%%%%%%%%%%%%%%%%%%%%%%%%%%%%%%%%%%%%%%%%%%%%%%%%%%%%%%%%%%%%%%%%%%%%%%%%%%%%%%%%%%%%%%%%%%%%%%%%%%%%%%%%%%%

\begin{prop} Suppose $\ell_\ast$ is an outermost, finitely stable MOTS.  If $\ell_{\ast\ast} < \ell_\ast$ is another finitely stable MOTS which is outermost in $[\ell_{\ast\ast}, \ell_\ast)$ (that is, $\theta^+(\ell) > 0$ for $\ell_{\ast\ast} < \ell < \ell_\ast$),
then a blow-up solution exists for $\ell_{\ast\ast}$.   
\end{prop}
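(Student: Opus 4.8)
\emph{Strategy.} The plan is to follow the proof of Proposition 1 as closely as possible; the one genuinely new point is that the candidate solution must now be pushed \emph{through} the outer MOTS $\ell_\ast$ with $|k(\ell_\ast)| < 1$, rather than being allowed to blow up there. First I would take, by standard ODE theory, a local solution $k$ of \eqref{J1} near $\ell_{\ast\ast}$ with $k(\ell_{\ast\ast}) = -1$; since $\ell_{\ast\ast}$ is a MOTS, \eqref{J2} forces $\del_\ell k(\ell_{\ast\ast}) = 0$, and the same finite-stability recursion as in Proposition 1 (differentiating \eqref{J2} at $\ell_{\ast\ast}$ and using that the first non-vanishing derivative $\del^m_\ell\theta^+(\ell_{\ast\ast})$ is positive) shows $k(\ell) > -1$ on $(\ell_{\ast\ast}, \ell_{\ast\ast}+\varepsilon)$ for some $\varepsilon > 0$. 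On $(\ell_{\ast\ast}, \ell_\ast)$ the hypothesis gives $\theta^+ > 0$, so the barrier argument of Proposition 1 carries over verbatim: if $k$ first returned to $-1$ at some $L \in (\ell_{\ast\ast}, \ell_\ast)$, then \eqref{J2} would give $\del_\ell k(L) = \theta^+(L) > 0$, contradicting $L = \inf\{\ell : k(\ell) \le -1\}$; and the upper bound ($k < 1$ if $\theta^- < 0$ everywhere, or $k < 0$ if $\text{tr}K < 0$ everywhere) is obtained exactly as there. Hence $|k| < 1$ on $(\ell_{\ast\ast}, \ell_\ast)\cap\mathcal{D}$, and since $R, K_\ell, K_R$ are smooth on $[\ell_{\ast\ast}, \ell_\ast]$ the solution extends smoothly up to and including $\ell_\ast$.

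The crux is then to show $k(\ell_\ast) > -1$ \emph{strictly}, i.e.\ that the solution does not already develop a cylindrical end at $\ell_\ast$. Because $\theta^+(\ell_\ast) = 0$, the one-point barrier computation no longer suffices; instead I would propagate the strict positivity of $u := 1 + k$ from the interior of $(\ell_{\ast\ast}, \ell_\ast)$ up to $\ell_\ast$ by a Gr\"onwall-type estimate. Rewriting \eqref{J2} in terms of $u$ gives $\del_\ell u = u\bigl(K_\ell(2-u) - H\bigr) + \theta^+$. Fixing any $\ell_1 \in (\ell_{\ast\ast}, \ell_\ast)$, the function $k$ is continuous, hence bounded, on $[\ell_1, \ell_\ast]$, so $|K_\ell(2-u) - H| \le C$ there for some constant $C > 0$; since $\theta^+ \ge 0$ on $[\ell_1, \ell_\ast]$ and $u > 0$ on $[\ell_1, \ell_\ast)$ by the previous paragraph, we get $\del_\ell u \ge -Cu$, i.e.\ $\del_\ell(e^{C\ell}u) \ge 0$ on $[\ell_1, \ell_\ast)$. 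Letting $\ell \to \ell_\ast^-$ yields $u(\ell_\ast) \ge e^{-C(\ell_\ast - \ell_1)}u(\ell_1) > 0$, so $k(\ell_\ast) > -1$; combined with the upper bound, $|k(\ell_\ast)| < 1$.

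Finally, for $\ell > \ell_\ast$ we have $\theta^+ > 0$ because $\ell_\ast$ is (globally) outermost, so the same pair of barrier arguments as in Proposition 1 keeps $-1 < k(\ell) < 1$ for all such $\ell$ in $\mathcal{D}$; smoothness of the data then forces $[\ell_\ast, \infty) \subset \mathcal{D}$, hence $[\ell_{\ast\ast}, \infty) \subset \mathcal{D}$ with $|k(\ell)| < 1$ for every $\ell > \ell_{\ast\ast}$, and Lemma \ref{lemma} gives $k(\ell) \to 0$ as $\ell \to \infty$. Thus $k$ is a blow-up solution for $\ell_{\ast\ast}$. I expect the middle step to be the main obstacle: one must rule out the solution blowing up at the \emph{outer} MOTS $\ell_\ast$, and since $\theta^+$ vanishes there this cannot be settled by a single-point sign computation but instead requires the differential-inequality estimate propagating $1+k > 0$ from $(\ell_{\ast\ast},\ell_\ast)$ to $\ell_\ast$.
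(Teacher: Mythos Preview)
Your argument is correct, and for the crucial step---showing that the solution does not blow up at the outer MOTS $\ell_\ast$---you take a genuinely different route from the paper. The paper argues by contradiction using the \emph{finite stability} of $\ell_\ast$: if $k(\ell_\ast) = -1$, then the same recursion on \eqref{J2} shows $\del_\ell^{j+1}k(\ell_\ast) > 0$ with all lower derivatives zero, where $j$ is the order of the first nonvanishing derivative of $\theta^+$ at $\ell_\ast$; since $\theta^+ > 0$ on \emph{both} sides of $\ell_\ast$, this $j$ is even, so $j+1$ is odd and $k < -1$ just to the \emph{left} of $\ell_\ast$, contradicting $L = \inf\mathcal{A} = \ell_\ast$. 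Your Gr\"onwall estimate on $u = 1+k$ instead propagates strict positivity of $u$ through $\ell_\ast$ using only $\theta^+ \ge 0$ on $[\ell_1,\ell_\ast]$ and boundedness of the coefficients. Your approach is more robust: it never invokes finite stability of $\ell_\ast$ at all (only of $\ell_{\ast\ast}$, to get off the initial value), so it actually proves a slightly stronger statement. The paper's approach, by contrast, stays within the same derivative-matching framework as Proposition~1 and extracts extra information from the parity forced by $\theta^+ > 0$ on both sides of $\ell_\ast$---elegant, but it genuinely needs the finite-stability hypothesis on $\ell_\ast$.
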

\begin{proof}
As in the proof of Proposition 1, it is immediate that a local solution $k(\ell)$ exists satisfying $k(\ell_{\ast\ast}) = -1$ and $\del_\ell k(\ell_{\ast\ast}) = 0$.
It again remains to show that this solution $k$ exists for all $\ell \in [\ell_{\ast\ast}, \infty)$, satisfies $|k(\ell)| < 1$ for all $\ell > \ell_{\ast\ast}$, and that $k(\ell) \rightarrow 0$ as $\ell \rightarrow \infty$.

Let $\mathcal{D} \subset \mathbb{R}$ be the maximal domain of definition of $k$, let $\mathcal{A} = \{ \ell \in (\ell_{\ast\ast}, \infty) \cap \mathcal{D} : k(\ell) \leq -1 \}$, and set $L = \inf \mathcal{A}$.
Since $\ell_{\ast\ast}$ is locally outermost and finitely stable, as in the proof of Proposition 1  we have $k(\ell) > -1$ for $\ell \in (\ell_{\ast\ast}, \ell_{\ast\ast} + \varepsilon)$ for some small $\varepsilon > 0$. 
Thus $L > \ell_{\ast\ast}$.
By definition of $L$, we must have $k(L) = -1$ and $\del_\ell k(L) \leq 0$.
But any point at which $k(\ell) = -1$ necessarily has $\del_\ell k(\ell) = \theta^+(\ell)$ (by equation \eqref{J2}), so since $\theta^+ > 0$ for all $\ell \in (\ell_{\ast\ast}, \infty) \setminus \{ \ell_\ast \}$, the only possibility is that $L = \ell_\ast$. 
So suppose $L = \ell_\ast$.
Since $\ell_\ast$ is a finitely stable MOTS, there exists $j \geq 1$ such that $\del^{j}_\ell \theta^+ (\ell_\ast) > 0$ while $\theta^+(\ell_\ast) = \del_\ell \theta^+(\ell_\ast) = \cdots = \del^{j-1}_\ell \theta^+ (\ell_\ast) = 0$, and since $\theta^+ > 0$ to either side of $\ell_\ast$, this $j$ must be even.
Computing each derivative of $k$ at $\ell_\ast$ recursively from \eqref{J2} and using the fact that $k(\ell_\ast) = k(L) = -1$, we then find that $\del^{j+1}_\ell k (\ell_\ast) > 0$ while $\del_\ell k (\ell_\ast) = \cdots = \del^{j}_\ell k (\ell_\ast) = 0$.
But since $j+1$ is odd, we must have $k(\ell) < -1$ for $\ell < \ell_\ast$ very close to $\ell_\ast$, contradicting the fact that $L = \ell_\ast$.
So in fact $\mathcal{A} = \emptyset$.

If the data satisfy $\theta^- < 0$ (resp.\ $\text{tr}K < 0$), we set $\mathcal{B} = \{ \ell \in (\ell_{\ast\ast}, \infty) \cap \mathcal{D} : k(\ell) \geq 1 \text{ (resp.\ 0)} \}$ and deduce that $\mathcal{B} = \emptyset$ exactly as in the proof of Proposition 1.
Thus $[\ell_{\ast\ast}, \infty) \subset \mathcal{D}$, i.e.\ $k(\ell)$ exists out to infinity, and $|k(\ell)| < 1$ on $(\ell_{\ast\ast}, \infty)$.
That $k(\ell) \rightarrow \infty$ as $\ell \rightarrow \infty$ now follows immediately from Lemma \ref{lemma}.
\end{proof}

An \textit{outer trapped surface} is a closed, spacelike surface for which $\theta^+ < 0$ at every point, so in our setting, it corresponds to a point $\ell$ at which $\theta^+(\ell) < 0$.
\begin{cor} There exist blow-up solutions for some finitely stable MOTSs which lie inside of (strictly) outer trapped surfaces.
\end{cor}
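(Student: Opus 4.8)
\emph{Proof proposal.} The plan is to obtain the required data set as a compactly supported modification of one to which Proposition 2 already applies, arranging that the modification creates a strictly outer trapped region without destroying the blow-up solution. Concretely, I would start from spherically symmetric data $(\mathcal{M}=[\ell_0,\infty),R,K_\ell,K_R)$ satisfying \eqref{decay1}--\eqref{decay6} and the standing condition $\theta^-<0$, and possessing an outermost, finitely stable MOTS $\ell_\ast$ together with an inner, finitely stable MOTS $\ell_{\ast\ast}<\ell_\ast$ with $\theta^+>0$ on $(\ell_{\ast\ast},\ell_\ast)$; such configurations are readily produced, since in spherical symmetry the data are just functions of $\ell$. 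Proposition 2 then supplies a blow-up solution $k_0$ for $\ell_{\ast\ast}$, so that $k_0>-1$ on $(\ell_{\ast\ast},\infty)$. The idea is to depress $K_R$ on a short interval to the right of $\ell_{\ast\ast}$, pushing $\theta^+$ below zero there while perturbing $k_0$ so little that it still clears $-1$.

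In detail: fix $a\in(\ell_{\ast\ast},\ell_\ast)$, let $\chi\ge 0$ be a fixed smooth function vanishing outside a small interval $(a,b)\subset(\ell_{\ast\ast},\ell_\ast)$ with $\max\chi=1$, and replace $K_R$ by $\widetilde K_R=K_R-\varepsilon\chi$, leaving $R$ (hence $H$) and $K_\ell$ alone. Then $\widetilde\theta^\pm=\theta^\pm-2\varepsilon\chi$ and $\widetilde{\text{tr}}K=\text{tr}K-2\varepsilon\chi$, so: the decay conditions \eqref{decay1}--\eqref{decay6} survive because $\chi$ has compact support; $\widetilde\theta^-<0$ still holds everywhere because $\chi\ge 0$; and $\ell_{\ast\ast}$ and $\ell_\ast$ remain finitely stable MOTSs of the new data, with $\ell_\ast$ still outermost and $\widetilde\theta^+>0$ on $(\ell_{\ast\ast},a)\cup(b,\ell_\ast)\cup(\ell_\ast,\infty)$, because the support of $\chi$ avoids both spheres. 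Taking $\varepsilon$ larger than $\tfrac12\sup_{[a,a+1]}\theta^+$ forces $\widetilde\theta^+<0$ at the point where $\chi=1$, so a neighbourhood of round spheres in $(a,b)$ becomes strictly outer trapped and $\ell_{\ast\ast}$ lies inside it.

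With $\varepsilon$ fixed, it then remains to choose $b$ close enough to $a$ that the solution $\widetilde k$ of the modified equation \eqref{J1} with $\widetilde k(\ell_{\ast\ast})=-1$ is a blow-up solution for $\ell_{\ast\ast}$. On $[\ell_{\ast\ast},a]$ the data is unchanged, so $\widetilde k\equiv k_0$ and $\widetilde k(a)=k_0(a)$, which sits a fixed positive distance $\eta_0:=1+k_0(a)$ above $-1$ (and a fixed distance below $1$). On $[a,b]$ the coefficients of \eqref{J1} are bounded by a constant depending only on the now-fixed original data and $\varepsilon$, so a crude a priori bound on $|\del_\ell\widetilde k|$ shows that once $b-a$ is small enough, $\widetilde k$ stays within $\tfrac12\min(1+k_0(a),\,1-k_0(a))$ of $k_0(a)$ on all of $[a,b]$; hence $-1<\widetilde k<1$ there and $\widetilde k(b)>-1$. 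For $\ell>b$ the data is the original again, with $\widetilde\theta^+>0$ except at the finitely stable MOTS $\ell_\ast$, so the barrier arguments from the proofs of Propositions 1 and 2 apply verbatim---in particular the even-order variation argument ruling out $\widetilde k=-1$ at $\ell_\ast$---to give $\widetilde k>-1$ on $(b,\infty)$; the $\widetilde\theta^-<0$ barrier gives $\widetilde k<1$, and Lemma \ref{lemma} gives $\widetilde k\to 0$. Thus $\widetilde k|_{[\ell_{\ast\ast},\infty)}$ is a blow-up solution for the finitely stable MOTS $\ell_{\ast\ast}$, which by construction lies inside a strictly outer trapped surface.

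The one genuinely delicate point is steering $\widetilde k$ safely through the trapped region, where $\widetilde\theta^+<0$ removes the usual barrier that keeps $\widetilde k$ above $-1$ and where, moreover, $\varepsilon$ cannot be taken small (it must outweigh $\theta^+$ on $(a,b)$). The remedy is to trade amplitude for length: fix $a$, hence the margin $\eta_0=1+k_0(a)$ inherited from the Proposition 2 solution; then fix $\varepsilon$; and only then shrink $b-a$ until the crude derivative bound forbids any drift of size $\eta_0$. Everything else---decay, the standing sign condition, and finite stability of $\ell_{\ast\ast}$ and $\ell_\ast$ with $\ell_\ast$ outermost---is automatic because the modification is compactly supported and supported away from the two distinguished spheres.
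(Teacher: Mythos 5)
Your proposal is correct, and it follows the same overall strategy as the paper --- start from the Proposition 2 configuration and perturb the data so that strictly outer trapped spheres appear outside $\ell_{\ast\ast}$, then argue the blow-up solution survives --- but the implementation differs in two ways worth noting. The paper perturbs in a small neighborhood of the \emph{outer} MOTS $\ell_\ast$, where $\theta^+$ already vanishes, so an arbitrarily small perturbation produces the trapped region; the persistence of the blow-up solution is then soft, via continuous dependence on parameters ($k_\varepsilon$ stays close to $k_0$, hence above $-1$ near $\ell_\ast$), with the barrier arguments of Proposition 2 handling everything else. You instead place the trapped region strictly inside $(\ell_{\ast\ast},\ell_\ast)$, where $\theta^+$ is bounded away from zero, so the perturbation of $K_R$ must have fixed large amplitude; you compensate by shrinking its support and controlling the solution across the trapped interval with an explicit a priori bound on $\partial_\ell\widetilde k$ from \eqref{J1} (coefficients bounded independently of $b-a$), so that the drift cannot exhaust the margin $1+k_0(a)$, after which the unmodified data and the barrier/finite-stability arguments at $\ell_\ast$ apply verbatim. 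Both routes are valid; yours is more quantitative, keeps $\ell_\ast$ an honest MOTS of the new data, and its ``fixed depth, small width'' mechanism is a nice complement to Corollary 2, where fixed width and large depth destroy the blow-up solution, while the paper's version is shorter because the perturbation can be taken small in amplitude as well as in support. The only loose ends in your write-up are presentational: $\chi$ is not literally ``fixed'' once you shrink $b$ (only its bound $\max\chi=1$ matters), and the existence of the starting data with two nested finitely stable MOTSs is asserted rather than exhibited --- though the paper does the same, and its Proposition 4 construction supplies an explicit example.
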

\begin{proof}
This follows essentially immediately from Proposition 2 by continuous dependence on parameters.
That is, suppose we smoothly perturb the initial data of Proposition 2 in a neighborhood of $\ell_\ast$, leaving it unchanged elsewhere, in such a way that $\theta^+(\ell_\ast \pm \varepsilon) = 0$ and  $\theta^+(\ell) < 0$ for $\ell_\ast - \varepsilon < \ell < \ell_\ast + \varepsilon$, some small $\varepsilon > 0$.
We may always choose $\varepsilon$ and the perturbation sufficiently small that the data retains the property that $\text{tr}K < 0$ or $\theta^- < 0$, whichever held for the original data.
Then the resulting solution $k_{\varepsilon}$ to the initial value problem \eqref{J1}, $k(\ell_{\ast\ast}) = -1$, will be close to the original solution $k_0$ for the unmodified data.
In particular, since the original solution $k_0 > -1$ near $\ell_\ast$, for sufficiently small $\varepsilon$ we have $k_\varepsilon > -1$ near $\ell_\ast$.
Hence $k_\varepsilon > -1$ for all $\ell > \ell_{\ast\ast}$ as well, since, as we saw in the proof of Proposition 2, the solution can only hit $-1$ where $\theta^+ \leq 0$.
Since either $\text{tr}K < 0$ or $\theta^- < 0$ everywhere, we again have an upper barrier for $k$ as in the proof of Proposition 2, from which it follows that $|k_\varepsilon| < 1$ for all $\ell > \ell_{\ast\ast}$.
Such solutions $k_\varepsilon$ are thus blow-up solutions for $\ell_{\ast\ast}$, which lies inside of trapped surfaces $\ell \in (\ell_\ast - \varepsilon, \ell_\ast + \varepsilon)$.
\end{proof} 

\begin{cor} There exist initial data sets containing finitely stable MOTSs interior to strictly outer trapped surfaces for which \textit{no} blow-up solution exists.
\end{cor}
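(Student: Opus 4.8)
The plan is to produce the example by making a \emph{large} perturbation of the initial data of Proposition 2 --- in contrast with the small perturbation used for Corollary 1. Start with the initial data set $(\mathcal{M}, R, K_\ell, K_R)$ of Proposition 2, which has a finitely stable MOTS $\ell_{\ast\ast}$, an outermost MOTS $\ell_\ast > \ell_{\ast\ast}$, satisfies the decay conditions \eqref{decay1}--\eqref{decay6}, and satisfies either $\theta^- < 0$ or $\text{tr}K < 0$ everywhere; I will treat the case $\theta^- < 0$, the other being analogous (see below). Choose a compact interval $J \subset (\ell_\ast, \infty)$ (so $J$ is bounded away from $\ell_{\ast\ast}$) long enough to contain a subinterval of length $2$ flanked by two transition zones, and put $c := 1 + \sup_J(|K_\ell| + 2|H|)$, a finite constant depending only on the original data. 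Now perturb by replacing $K_R$ with $K_R - b$, where $b \geq 0$ is a smooth bump supported in the interior of $J$, chosen large enough that the new outer expansion $\theta^+ = 2K_R + H$ satisfies $\theta^+ \leq -c$ on a subinterval $I \subset J$ with $|I| \geq 2$, and returns to its (positive) original values near the endpoints of $J$. Because this perturbation alters neither $R$ (hence neither $H$ nor the quantities in \eqref{decay1}--\eqref{decay3}) nor $K_\ell$, is compactly supported and disjoint from a neighborhood of $\ell_{\ast\ast}$, and only \emph{decreases} $K_R$, it preserves the decay conditions, the finite stability of $\ell_{\ast\ast}$, and the condition $\theta^- = 2K_R - H < 0$; and each sphere $\ell \in I$ is now strictly outer trapped and encloses $\ell_{\ast\ast}$. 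So the perturbed data set contains a finitely stable MOTS interior to strictly outer trapped surfaces, as required.

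Next I would rule out a blow-up solution for $\ell_{\ast\ast}$ directly. Suppose $k$ were one; then $k(\ell_{\ast\ast}) = -1$ and $-1 < k < 1$ on $(\ell_{\ast\ast}, \infty)$, so in particular $k$ is defined and takes values in $(-1,1)$ on $I$. On $I$ we may estimate, using $|1-k^2| \leq 1$ and $|1+k| \leq 2$ in \eqref{J2} together with $\theta^+ \leq -c$ and the definition of $c$,
\[ \del_\ell k = K_\ell(1-k^2) - H(1+k) + \theta^+ \;\leq\; |K_\ell| + 2|H| - c \;\leq\; -1 . \]
Hence $k$ would decrease by more than $|I| \geq 2$ across $I$, which is impossible for a function valued in the length-$2$ interval $(-1,1)$. (Put differently, $k$ must hit $-1$ at some $\ell^\ast \in I$, where \eqref{J2} gives $\del_\ell k(\ell^\ast) = \theta^+(\ell^\ast) \leq -c < 0$, so $k < -1$ just past $\ell^\ast$.) So no blow-up solution exists for $\ell_{\ast\ast}$.

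The hard part is not any one step but the coordinated bookkeeping of keeping all the structural hypotheses intact while digging a trapped well deep and wide enough to run the estimate above: asymptotic flatness with the rates \eqref{decay1}--\eqref{decay6}, one of $\theta^- < 0$ or $\text{tr}K < 0$, and the finite stability of $\ell_{\ast\ast}$. The device that makes this painless is to perturb $K_R$ alone, compactly supported and only in the decreasing direction. This leaves $R$, $H$, and $K_\ell$ untouched, so \eqref{decay1}--\eqref{decay3} and the germ of $\theta^+$ at $\ell_{\ast\ast}$ are unchanged; it puts no constraint on the altered values of $K_R$ beyond smoothness, since the $O(\ell^{-p})$ bounds concern only the asymptotic end, which the perturbation does not reach; and it can only decrease both $\theta^- = 2K_R - H$ and $\text{tr}K = K_\ell + 2K_R$, so whichever sign condition held for the original data survives. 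In the case $\text{tr}K < 0$ one argues the same way using the upper barrier $k < 0$ from the proof of Proposition 1, so that $|1+k| \leq 1$ on the relevant range and one may take the (smaller) constant $c := 1 + \sup_J(|K_\ell| + |H|)$. Finally, the putative $k$ is defined on all of $I$ by the very definition of a blow-up solution (it is $C^\infty$ on $[\ell_{\ast\ast}, \infty)$), so there is no existence issue to address there.
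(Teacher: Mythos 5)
Your proposal is correct and follows essentially the same strategy as the paper: take the Proposition 2 data and decrease $K_R$ on a compact region beyond $\ell_{\ast\ast}$ (leaving $H$, $K_\ell$, and the data near $\ell_{\ast\ast}$ untouched, so the decay conditions, finite stability, and the sign condition $\theta^-<0$ or $\mathrm{tr}K<0$ survive) to create a sufficiently deeply outer trapped region, which forces any putative solution with $k(\ell_{\ast\ast})=-1$ below $-1$ via \eqref{J2}. The only differences are cosmetic: you dig the trapped well outside $\ell_\ast$ rather than at it, you make the paper's ``by inspection'' step quantitative ($\del_\ell k\le -1$ on an interval of length $\ge 2$), and you argue by direct contradiction with the definition of a blow-up solution instead of invoking ODE uniqueness.
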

\begin{proof}
Again we take as our starting point the initial data set of Proposition 2 and smoothly perturb the data in a neighborhood of $\ell_\ast$ so that $\theta^+(\ell_\ast \pm \varepsilon) = 0$ and  $\theta^+(\ell) < 0$ for $\ell_\ast - \varepsilon < \ell < \ell_\ast + \varepsilon$,  this time for some small \textit{fixed} $\varepsilon > 0$.
Next, we smoothly decrease $K_R$ within the interval $(\ell_\ast - \varepsilon,  \ell_\ast + \varepsilon)$, leaving $H$ and $K_\ell$ fixed, such that $\theta^+(\ell) < -\tau$ for $\ell_\ast - \frac 12 \varepsilon < \ell <  \ell_\ast + \frac 12 \varepsilon$, some $\tau \in (0, \infty)$.
(By choosing $\varepsilon$ and the initial perturbation sufficiently small at the outset, these modifications preserve whichever of $\text{tr}K < 0$ or $\theta^- < 0$ held initially, since decreasing $K_R$ while fixing $H$ and $K_\ell$ only decreases $\text{tr}K$ and $\theta^-$.)
Then by inspection of \eqref{J2}, it is clear that for sufficiently large $\tau$, the solution $k_\tau$ for the initial value problem \eqref{J1}, $k(\ell_{\ast\ast}) = -1$, will be forced to drop below $-1$ somewhere in the interval $(\ell_\ast - \frac 12 \varepsilon, \ell_\ast + \frac 12 \varepsilon)$.
Such a solution is therefore \textit{not} a blow-up solution for $\ell_{\ast\ast}$ in the corresponding initial data set, and by uniqueness of solutions to ODEs, there can be no other blow-up solution.
\end{proof} 
%\noindent
%\textit{Remark.}  Proposition 4 below shows that initial data sets as in Proposition 2 (and by continuity, Corollary 1) can be constructed to comply with the dominant energy condition, but it would be interesting to know whether the modifications outlined in Corollary 2 necessarily force the initial data to violate it. 

%%%%%%%%%%%%%%%%%%%%%%%%%%%%%%%%%%%%%%%%%%%%%%%%%%%%%%%%%%%%%%%%%%%%%%%%%%%%%%%%%%%%%%%%%%%%%%%%%%%%%%%%%%%%%%%%%%%%%%%%%%%%%%%%%
%																%
%																%
%			NON-EXISTENCE OF BLOW-UP SOL'NS FOR OUTER-AREA-MINIMIZING MOTS, K=0	  				%
%																%
%																%
%%%%%%%%%%%%%%%%%%%%%%%%%%%%%%%%%%%%%%%%%%%%%%%%%%%%%%%%%%%%%%%%%%%%%%%%%%%%%%%%%%%%%%%%%%%%%%%%%%%%%%%%%%%%%%%%%%%%%%%%%%%%%%%%%

\section{Blow-up solutions and area-minimization}\label{area}

The following is an unpublished result of Rick Schoen's.

\begin{prop}  In the time-symmetric case (not necessarily spherically symmetric), a MOTS can have a blow-up solution only if it is outer-area-minimizing.
\end{prop}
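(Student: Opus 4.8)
In the time-symmetric case $K\equiv 0$, so Jang's equation \eqref{Jang1} says exactly that $\mathrm{graph}\,f$ is a minimal hypersurface of $(\mathcal M\times\mathbb R,\,g+dt^2)$, and a MOTS $\Sigma$ is simply a minimal surface in $\mathcal M$; a blow-up solution $f$ (in the sense recalled in the introduction) lives on the region $\Omega\subset\mathcal M$ exterior to $\Sigma$, blows up --- say to $+\infty$, the $-\infty$ case being symmetric under $t\mapsto -t$ --- at $\Sigma$, and $\mathrm{graph}\,f$ is asymptotic to the cylinder $\Sigma\times\mathbb R$ there. I would build the argument around two facts. First, the vertical translates $G_c:=\mathrm{graph}(f+c)$, $c\in\mathbb R$, foliate $\Omega\times\mathbb R$ by minimal hypersurfaces, each being the isometric image of $\mathrm{graph}\,f$ under $(x,t)\mapsto(x,t+c)$; hence the downward unit normal $\nu$ of this foliation satisfies $\mathrm{div}_{\mathcal M\times\mathbb R}\nu=0$ (its divergence is the mean curvature of the leaves), so $\omega:=\iota_\nu\,d\mathrm{vol}$ is a closed calibrating $3$-form on $\Omega\times\mathbb R$ for which the leaves achieve equality. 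Second, from the asymptotically cylindrical behavior of blow-up solutions established in \cite{SY81}, the level sets $\Sigma_s:=\{f=s\}$ are, for $s$ large, smooth surfaces converging to $\Sigma$, so $|\Sigma_s|\to|\Sigma|$. Given any surface $\Sigma'$ enclosing $\Sigma$, the goal is $|\Sigma|\le|\Sigma'|$.

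The core of the proof is a comparison between a piece of the graph and a "wall-and-cap" competitor. Assume first that $\Sigma'$ lies in the interior of $\Omega$. Fix a large height $T$, let $\Sigma_T=\{f=T\}$, let $A_T$ be the annular region between $\Sigma_T$ and $\Sigma'$, and let $G_{A_T}=\{(x,f(x)):x\in A_T\}$ be the corresponding piece of $\mathrm{graph}\,f$. Compare $G_{A_T}$ with the surface $\{(x,t):x\in\Sigma',\ f(x)\le t\le T\}\cup(A_T\times\{T\})$: it has the same boundary as $G_{A_T}$ (the rim of the graph over $\Sigma'$, together with $\Sigma_T\times\{T\}$), the wall contributes area $\int_{\Sigma'}(T-f)\,dA_{\Sigma'}\le T|\Sigma'|+C_1$, and the cap contributes $|A_T|\le C_2$, with $C_1,C_2$ independent of $T$. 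Since $G_{A_T}$ and the wall-plus-cap cobound the compact region $\{(x,t):x\in A_T,\ f(x)\le t\le T\}\subset\Omega\times\mathbb R$, Stokes' theorem applied to the closed form $\omega$ gives $|G_{A_T}|=\int_{G_{A_T}}\omega=\int_{\mathrm{wall}\cup\mathrm{cap}}\omega\le T|\Sigma'|+C_1+C_2$. For the reverse inequality, apply the coarea formula on $G_{A_T}$ to the restriction of the ambient height coordinate $t$: since $|\nabla_{G_{A_T}}t|\le|\partial_t|=1$ and the slice $\{t=s\}$ of $G_{A_T}$ is isometric to $\Sigma_s$, one gets $|G_{A_T}|\ge\int|\Sigma_s|\,ds\ge\int_{T_0}^{T}|\Sigma_s|\,ds$ for any fixed $T_0$ with $[T_0,T]\subset f(A_T)$. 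Given $\varepsilon>0$, choose $T_0$ so that $|\Sigma_s|\ge|\Sigma|-\varepsilon$ for $s\ge T_0$; then $|G_{A_T}|\ge(|\Sigma|-\varepsilon)(T-T_0)$. Combining the two estimates and letting $T\to\infty$, then $\varepsilon\to0$, yields $|\Sigma|\le|\Sigma'|$. A general enclosing surface $\Sigma'$, possibly meeting $\Sigma$, follows by applying this to smooth outward approximations $\Sigma'_\delta$ contained in the interior of $\Omega$ with $|\Sigma'_\delta|\to|\Sigma'|$.

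The main obstacle is the asymptotic input used in the lower bound: one must know that a blow-up solution is genuinely asymptotically cylindrical over $\Sigma\times\mathbb R$ --- that $f$ grows (logarithmically in the distance to $\Sigma$) in a sufficiently controlled way that the level sets $\{f=s\}$ are smooth for $s$ large, foliate a collar of $\Sigma$, and converge to $\Sigma$ with $|\Sigma_s|\to|\Sigma|$ --- since this both powers the coarea lower bound and legitimizes treating $\Sigma_T$ as a smooth surface and $\Sigma_s\subset A_T$ when cutting and pasting. Everything else --- that $\nu$ is divergence-free, that $\omega$ is a calibration for which $G_{A_T}$ realizes equality, the bound $|\nabla t|\le1$, and the homological bookkeeping that $G_{A_T}$ and the wall-and-cap cobound the stated region with matching boundaries --- is routine once the asymptotic picture is in hand.
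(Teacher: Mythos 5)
Your argument is correct and is essentially the paper's: in the time-symmetric case the graph is minimal, so $\omega = i_n\,dV$ is a closed calibrating form, and the asymptotically cylindrical behavior of the blow-up solution makes the area of the graph in a height band of length $T$ grow like $|\Sigma|\,T$, while a competitor built over the enclosing surface costs only $|\Sigma'|\,T + O(1)$, so comparison gives $|\Sigma|\le|\Sigma'|$. The only differences are organizational --- the paper argues by contradiction with a competitor that splices the cylinder $\Sigma_1\times[\tau,2\tau]$ into the graph (with bounded transition pieces) and invokes calibrated minimality from \cite{HL}, whereas you apply Stokes directly to your wall-and-cap competitor and obtain the lower bound via the coarea formula --- and both versions rest on exactly the asymptotic input from \cite{SY81} (the graph, equivalently the level sets of $f$, converging to the cylinder with areas tending to $|\Sigma|$) that you correctly flag as the essential ingredient.
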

\noindent
\textit{Remark.}  In the non-spherically symmetric context, a blow-up solution for a MOTS $\Sigma \subset \mathcal{M}$ is a solution to \eqref{Jang1} on the exterior of $\Sigma$ whose graph in $\mathcal{M} \times \mathbb{R}$ asymptotically approaches the infinite cylinder $\Sigma \times \mathbb{R^+}$; cf.\ \cite{SY81} Corollary 2. 
\begin{proof}  Suppose otherwise.  
That is, suppose $(\mathcal{M}, g, K)$ is an initial data set which is time-symmetric ($K\equiv 0$),  suppose $\Sigma_0$ is a MOTS in $\mathcal{M}$ with corresponding blow-up solution $f$, and suppose $\Sigma_1$ is another surface lying entirely outside of $\Sigma_0$ with strictly smaller area.
Here ``outside of" implies that $\Sigma_1$ is contained in the component of $\mathcal{M}\setminus\Sigma_0$ having the asymptotically flat end and that $\Sigma_1$ is homologous to $\Sigma_0$.
Let $N = \text{graph}f \subset \mathcal{M} \times \mathbb{R}$, where $\mathcal{M} \times \mathbb{R}$ is equipped with the product metric $g +dt^2$, $t$ a coordinate along the $\mathbb{R}$ factor.

In what follows we employ the following notation: if $S$ is any submanifold of $\mathcal{M} \times \mathbb{R}$ and $I \subset \mathbb{R}$ is any interval, then $S_I$ denotes the portion of $S$ lying in $\mathcal{M} \times I$, i.e.\ $S_I := S \cap (\mathcal{M} \times I)$.
We use $| S |$ to denote the volume of $S$.

Now let $\widetilde{N}(\tau) \subset \mathcal{M} \times \mathbb{R}$ be a 1-parameter family of smooth hypersurfaces satisfying the following:
\begin{enumerate}
  \item $\widetilde{N}(\tau)_{(-\infty, \tau - 1]} \equiv N_{(-\infty, \tau - 1]}$
  \item $\widetilde{N}(\tau)_{[2\tau +1, \infty)} \equiv N_{[2\tau +1, \infty)}$
  \item $\widetilde{N}(\tau)_{[\tau, 2\tau]} \equiv \Sigma_1 \times [\tau, 2\tau]$
  \item $\left| \widetilde{N}(\tau)_{[\tau - 1, \tau]} \right|  + \left| \widetilde{N}(\tau)_{[2\tau, 2\tau + 1]} \right|< C$,  $C$  a constant  independent of $\tau$.
\end{enumerate}
Clearly
\[ |\widetilde{N}(\tau)_{[\tau, 2\tau]}| = |\Sigma_1 |  \tau. \] 
Since $f$ blows up at $\Sigma_0$ and $N$ is asymptotic to the cylinder $\Sigma_0 \times \mathbb{R}$, for large $\tau$, $|N_{[\tau - 1, 2\tau + 1]}|$ approaches $| \Sigma_0 \times [\tau - 1, 2\tau + 1] | = |\Sigma_0|  (\tau + 2)$.  
In particular, for large enough $\tau$ we have
\[ |N_{[\tau - 1, 2\tau + 1]}| > |\Sigma_0|  (\tau + 2) - C^\prime, \]
where $C^\prime  > 0$ is a constant independent of $\tau$.
Then for large $\tau$, we have
\begin{eqnarray*} |\widetilde{N}(\tau)| - |N| & = & |\widetilde{N}(\tau)_{[\tau-1, 2\tau +1]}| - |N_{[\tau-1, 2\tau +1]}| \\
& = & |\Sigma_1 |  \tau + \left| \widetilde{N}(\tau)_{[\tau - 1, \tau]} \right|  + \left| \widetilde{N}_{[2\tau, 2\tau + 1]} \right|
- \left|N_{[\tau-1, 2\tau +1]}\right| \\
& < & |\Sigma_1 |  \tau + C - |\Sigma_0|  (\tau + 2) + C^\prime \\
& = & \left( |\Sigma_1 | - |\Sigma_0|\right)  \tau + C^{\prime\prime},
\end{eqnarray*}
where $C^{\prime\prime}$ is again independent of $\tau$.
Thus since $|\Sigma_1 | - |\Sigma_0| < 0$, for $\tau$ sufficiently large we have $|\widetilde{N}(\tau)| < |N|$, or equivalently,
$|\widetilde{N}(\tau)_{[\tau-1, 2\tau +1]}| < |N_{[\tau-1, 2\tau +1]}|$.

Now, since $K \equiv 0$, the Jang equation $\mathcal{J}[f] =  0$ reduces to the minimal surface equation $\mathcal{H}[f] = 0$.
Define a 2-form on $\mathcal{M} \times \mathbb{R}$ by $\omega := i_n dV$, where $dV$ is the volume form on $\mathcal{M}\times\mathbb{R}$ and $n$ is the downward unit normal to $N = \text{graph}f $, extended to be constant along the $\mathbb{R}$-factor of $\mathcal{M}\times \mathbb{R}$.
Then $\omega$ is a calibration for $N$; that is, $\omega|_N$ is the volume form for $N$, and 
\[ d\omega = d (i_n dV) = (\text{div}\,n) dV \equiv 0,\]
since $\text{div}\, n = \mathcal{H}[f] = 0.$
A calibrated submanifold minimizes volume for its boundary and relative homology class \cite{HL}, so in particular, we should have
\[ |N_{[\tau-1, 2\tau +1]}| \leq |\widetilde{N}(\tau)_{[\tau-1, 2\tau +1]}|\]
for any $\tau$.
But this is a contradiction to the above.
\end{proof}

%%%%%%%%%%%%%%%%%%%%%%%%%%%%%%%%%%%%%%%%%%%%%%%%%%%%%%%%%%%%%%%%%%%%%%%%%%%%%%%%%%%%%%%%%%%%%%%%%%%%%%%%%%%%%%%%%%%%%%%%%%%%%%%%%
%																%
%																%
%			CONSTRUCTION OF A NON-OUTER-AREA-MINIMIZING MOTS W/BLOW-UP SOL'N					%
%																%
%																%
%%%%%%%%%%%%%%%%%%%%%%%%%%%%%%%%%%%%%%%%%%%%%%%%%%%%%%%%%%%%%%%%%%%%%%%%%%%%%%%%%%%%%%%%%%%%%%%%%%%%%%%%%%%%%%%%%%%%%%%%%%%%%%%%%

\begin{prop} There exist spherically symmetric initial data sets with $K$ not identically $0$, satisfying the dominant energy condition, which contain MOTSs which have blow-up solutions but which are not outer-area-minimizing.  
\end{prop}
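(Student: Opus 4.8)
\emph{Proof proposal.}
The plan is to build the data set explicitly by prescribing the spherically symmetric profiles, and then to check the hypotheses of Proposition 2 together with the dominant energy condition. Since the sphere $\{\ell\}$ has area $4\pi R(\ell)^2$, a MOTS $\ell_{\ast\ast}$ fails to be outer-area-minimizing as soon as $R(\ell_1) < R(\ell_{\ast\ast})$ for some $\ell_1 > \ell_{\ast\ast}$, so it suffices to make $R$ dip below $R(\ell_{\ast\ast})$ somewhere outside $\ell_{\ast\ast}$. Note also that on any interval where $R$ is strictly decreasing one has $H = 2\del_\ell R/R < 0$, so $\theta^- = 2K_R - H$ is positive wherever $\theta^+ = 2K_R + H \ge 0$; hence on such data the standing hypothesis $\theta^- < 0$ cannot hold, and we must instead arrange $\text{tr}K < 0$ everywhere. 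The target is therefore: spherically symmetric data carrying two finitely stable MOTSs $\ell_{\ast\ast} < \ell_\ast$ with $\ell_\ast$ globally outermost and $\theta^+ > 0$ on $(\ell_{\ast\ast},\ell_\ast)$; with $R(\ell_1) < R(\ell_{\ast\ast})$ for some $\ell_1 \in (\ell_{\ast\ast},\ell_\ast)$; with $\text{tr}K < 0$ everywhere; satisfying \eqref{decay1}--\eqref{decay6}; and satisfying the dominant energy condition.

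I would prescribe $R$, then $\theta^+$ (which fixes $K_R = \tfrac12(\theta^+ - H)$, $H = 2\del_\ell R/R$), then $\text{tr}K$ (which fixes $K_\ell = \text{tr}K - 2K_R$). Take $R$ smooth and positive on $[\ell_0,\infty)$ with a strict local maximum at $\ell_{\ast\ast}$, strictly decreasing on $[\ell_{\ast\ast},\ell_1]$ down to a value just below $R(\ell_{\ast\ast})$, then strictly increasing and tending to a positive-mass asymptotic profile of the form $R = \ell + c + d\,\ell^{-1} + \cdots$ with $d<0$, compatible with \eqref{decay1}--\eqref{decay3}; arrange the profile so that the scalar curvature $R_g = \tfrac{2}{R^2}\bigl(1-(\del_\ell R)^2\bigr) - \tfrac{4\del^2_\ell R}{R}$ of $d\ell^2 + R^2 ds^2$ is strictly positive on $(\ell_0,\infty)$, with a tail $R_g \sim c_0\,\ell^{-4}$, $c_0 > 0$. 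The dip is taken shallow and gentle so that $\lvert\del_\ell R\rvert$ and $\lvert\del^2_\ell R\rvert$ remain small there; positivity of $R_g$ is then just the familiar inequality $(\del_\ell R)^2 + 2R\,\del^2_\ell R < 1$ from constructions of positive-scalar-curvature metrics with necks. Take $\theta^+\ge 0$ smooth, vanishing to finite even order exactly at $\ell_{\ast\ast}$ and at a point $\ell_\ast$ slightly outside $\ell_1$ (where $\del_\ell R$ is small), positive elsewhere on $(\ell_{\ast\ast},\infty)$, and with $\theta^+ - H = O(\ell^{-3})$ near infinity; then $K_R$ vanishes at $\ell_{\ast\ast}$ (since $\del_\ell R(\ell_{\ast\ast})=0$), is uniformly small if $\theta^+$ is small between its zeros, and is $O(\ell^{-3})$ at infinity. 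Take $\text{tr}K$ smooth, strictly negative, uniformly small, with $\text{tr}K = O(\ell^{-3})$, $\del_\ell\text{tr}K = O(\ell^{-3})$, and $\ell^{-3}$-coefficient small relative to $c_0$; set $K_\ell := \text{tr}K - 2K_R$. With these choices \eqref{decay4}--\eqref{decay6} hold, $K\not\equiv 0$, and $\ell_{\ast\ast}$, $\ell_\ast$ are finitely stable MOTSs of exactly the type required by Proposition 2, which then yields a blow-up solution for $\ell_{\ast\ast}$; and $R(\ell_1) < R(\ell_{\ast\ast})$ shows $\ell_{\ast\ast}$ is not outer-area-minimizing.

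What remains is the dominant energy condition $\mu \ge |J|$. In spherical symmetry $J$ has the single component $J_\ell = -2\del_\ell K_R + H(K_\ell - K_R)$, while $\mu = \tfrac12 R_g + 2K_\ell K_R + K_R^2$. On the fixed compact interior region, $R_g$ is bounded below by a positive constant, while the $K$-quadratic terms in $\mu$ and the quantity $|J|$ are $O(\|K\|_{C^1})$; the only place $\|K\|_{C^1}$ is not small by choice is near the two MOTSs, where $K_R$ is essentially pinned to $-\del_\ell R/R$, but that quantity vanishes at $\ell_{\ast\ast}$ and is small at $\ell_\ast$ by our placement, so $\mu > |J|$ there once $\theta^+$ and $\text{tr}K$ are small enough. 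Near infinity the decay rates give $\mu = \tfrac12 R_g + O(\ell^{-6}) \sim \tfrac12 c_0\,\ell^{-4}$ and $|J| = O(\ell^{-4})$ with leading coefficient a fixed multiple of the $\ell^{-3}$-coefficient of $\text{tr}K$, so $\mu \ge |J|$ there precisely because the metric tail $c_0$ was chosen to dominate it. Carrying out this energy-condition check in full — and in particular simultaneously arranging the $R$-dip, the positivity of $R_g$ in the interior, and this balance at infinity between the metric tail and the momentum tail forced by $\text{tr}K < 0$ — is the main technical obstacle; granting it, the blow-up solution follows at once from Proposition 2, and the failure of outer-area-minimization is built into $R$.
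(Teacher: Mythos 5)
Your overall skeleton matches the paper's (a dip in the area-radius $R$, two finitely stable MOTSs with $\text{tr}K<0$ everywhere, Proposition 2 applied to the inner one), but the step you defer as ``the main technical obstacle'' --- the dominant energy condition check --- is not merely technical: as you have set it up, it cannot be carried out. Your plan is to keep $K$ uniformly small and to get the energy condition from strict positivity of the scalar curvature $R_g$ of $d\ell^2+R^2ds^2$, with a tail $R=\ell+c+d\ell^{-1}$, $d<0$, so that $R_g\sim c_0\ell^{-4}$. But write $m(\ell)=\tfrac{R}{2}\bigl(1-(\del_\ell R)^2\bigr)$; a direct computation gives $\del_\ell m=\tfrac14 R^2(\del_\ell R)\,R_g$. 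At the bottom $\ell_1$ of your dip, $\del_\ell R(\ell_1)=0$, so $m(\ell_1)=R(\ell_1)/2>0$, a fixed order-one quantity (roughly half the neck radius). If $R_g>0$ and $R$ is increasing on $[\ell_1,\infty)$, then $m$ is nondecreasing there, so $m(\ell)\geq R(\ell_1)/2$ for all large $\ell$; this forces $\del_\ell R\leq 1-\tfrac{m_0}{\ell+C}$ with $m_0=R(\ell_1)/2$, hence $R(\ell)\leq\ell-m_0\ln\ell+O(1)$. That is incompatible with your proposed tail and violates \eqref{decay1} and \eqref{decay2}, which force $|R-\ell|=O(1)$, i.e.\ force the quasi-local mass to drop back to (essentially) zero along the end. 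Weakening ``$R_g>0$'' to ``DEC with $\|K\|$ uniformly small'' does not help: the constraint gives $R_g\geq -C\|K\|^2$, so the total drop of $m$ over $[\ell_1,\infty)$ is only $O(\epsilon^2)$ (plus a small tail contribution under your $O(\ell^{-3})$ decay), far less than the required $R(\ell_1)/2$. In short, a nearly time-symmetric data set obeying the paper's decay conditions cannot contain an outer sphere of smaller area --- a quantitative echo of Proposition 3 --- so the smallness-of-$K$ strategy is self-defeating, and choosing $\epsilon$ ``small enough'' at the end makes things worse, not better.

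The paper's construction is designed precisely to evade this: it lets $R_g$ be very negative in the transition region and makes $K$ \emph{large} there rather than small. Concretely, on the compact region it takes $2K_R>|H|$ away from the two MOTSs and $\text{tr}K\equiv -B$ with $B$ huge, so that the quadratic $K$-terms dominate the energy density, while at the MOTSs themselves $H$, $H_{,\ell}$, $K_R$, $K_{R,\ell}$ all vanish so the momentum density is zero and the energy density reduces to $2/R^2>0$; on the far region it imposes an exact ansatz for $K_R$ (an ODE in disguise) making the momentum density identically zero, and positivity of the energy density there is guaranteed by a condition on the tail of $a=R-\ell$ (property (vi)), with a final smoothing at the matching point. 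If you want to salvage your version, you must either abandon the small-$K$/positive-scalar-curvature mechanism in favor of such a large-$K$ balancing act, or allow a genuinely positive-mass (Schwarzschild-like) end --- but then \eqref{decay1}--\eqref{decay3} fail as stated and you would have to re-prove Lemma \ref{lemma} and Proposition 2 under weaker decay, which your proposal does not address.
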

\begin{proof}  By construction: we find an example of an initial data set $(\mathcal{M} = [\ell_0, \infty), R, K_\ell, K_R)$ containing two MOTSs, say at $\ell_1$, $\ell_2$ where $\ell_0 < \ell_1 < \ell_2$, such that $R(\ell_1) > R(\ell_2)$ but $\ell_1$ has a blow-up solution corresponding to it.
Since $R(\ell)$ is by definition the area radius of the round 2-sphere at $\ell$, $R(\ell_1) > R(\ell_2)$ implies that $\text{Area}(\ell_1) > \text{Area}(\ell_2)$.
Then $\ell_1$ is a non-outer-area-minimizing MOTS with a blow-up solution. 
The main difficulty lies in arranging the data such that the dominant energy condition is satisfied.

For convenience and clarity, we switch to using subscript notation for derivatives in what follows --- e.g.\ $k_{,\ell}$ in place of $\del_\ell k$, et cetera.

We begin by finding a function $a(\ell) \in C^\infty\left([ \ell_0, \infty)\right)$ satisfying the following properties:

\begin{enumerate}[ i. ]
\item $a(\ell) > -\ell$ for all $\ell \in [ \ell_0, \infty)$ 
\item $a_{,\ell}(\ell_1) = a_{,\ell}(\ell_2) = -1 $ 
\item $a_{,\ell}(\ell) < -1 $ for $\ell \in (\ell_1, \ell_2)$ and $a_{,\ell}(\ell) > -1$ for $\ell \in [\ell_0, \infty) \setminus [\ell_1, \ell_2]$ 
\item $a_{,\ell\ell}(\ell_1) = a_{,\ell\ell}(\ell_2) = 0$ 
\item  $|a(\ell)| + \ell \,|a_{,\ell}(\ell)|  + \ell^2 |a_{,\ell\ell}(\ell)| \leq C$ for all $\ell \in [\ell_0, \infty)$, some $C < \infty$
\item  $a_{,\ell}(a_{,\ell} + 2) + 2a_{,\ell\ell}(a+\ell) < 0$ for $\ell \geq L$, some constant $L > \ell_2$.
\end{enumerate}
Property (vi) is readily satisfied if, for example, $a(\ell) = c\ell^{-1}$ for very large $\ell$, where $c$ is any constant.
We now set
\begin{equation} R(\ell) = \ell + a(\ell), \label{R}\end{equation}
from which we derive that
\begin{equation} H(\ell) =  \frac{2(1+a_{,\ell}(\ell))}{\ell + a(\ell)} \label{Ha} \end{equation}
and
\begin{equation} H_{,\ell}(\ell) = \frac{2a_{,\ell\ell}(\ell)}{\ell + a(\ell)} - \textstyle\frac{1}{2}H^2(\ell). \label{dH} \end{equation}
Our assumptions about $a(\ell)$ imply the following:
\begin{enumerate}[   $\circ$   ]
\item  $ R(\ell) > 0  $ for all $\ell$
\item  $ \displaystyle\left| 1 - \frac{R^2(\ell)}{\ell^2} \right| = \left| \frac{2a}{\ell} + \frac{a^2}{\ell^2} \right| = O(\ell^{-1}) $ 
\item  $ \displaystyle\left| H(\ell) - \frac{2}{\ell} \right| = \left| \frac{2\ell a_{,\ell} - 2a}{\ell(\ell+ a)}\right| = O(\ell^{-2})$
\item  $ \left| H_{,\ell}(\ell) + \displaystyle\frac{2}{\ell^2} \right| = \displaystyle\left| \frac{2a_{,\ell\ell}}{\ell + a} + \frac{2a(a +2\ell)}{\ell^2(\ell+a )^2} - \frac{2a_{,\ell} (a_{,\ell}  + 2)}{(\ell+a )^2} \right| = O(\ell^{-3})$.
\end{enumerate}
That is, decay conditions \eqref{decay1}-\eqref{decay3} for the metric $g = d\ell^2 + R^2(\ell) ds^2$ are satisfied.

Next, recalling the constant $L$ from property (vi) of the definition of $a(\ell)$, we choose $K_R(\ell)$ to be a function in $C^\infty\left( [\ell_0, L] \right)$ satisfying the following:

\begin{enumerate}[ i. ]
\item $K_R(\ell_1) = K_R(\ell_2) = 0$  
\item ${K_R}_{,\ell}(\ell_1) = {K_R}_{,\ell}(\ell_2) = 0$  
\item $2K_{R,\ell\ell}(\ell_1) > -H_{,\ell\ell}(\ell_1)$ and $2K_{R,\ell\ell}(\ell_2) > -H_{,\ell\ell}(\ell_2)$  
\item $2K_R(\ell) > |H(\ell)|$ for $\ell \in [\ell_0, L] \setminus \{ \ell_1, \ell_2 \} $.  
\end{enumerate}

Now, the dominant energy condition says that we must have 
\[ \rho \geq |\mu|, \]
where $\rho$ and $\mu_i$ are defined via the Einstein constraint equations, 
\[ 8\pi \rho = \text{Scal} + |K|^2 - (\text{tr}K)^2 \]
and
\[ 8\pi \mu_i = 2 \left(\nabla^j K_{ij} - \nabla_i (\text{tr}K ) \right),\] 
respectively.
Computing these expressions in our spherically symmetric setting, we find that the dominant energy condition amounts to precisely the following inequality:
\begin{equation}
\frac{2}{R^2} - \frac 32 H^2 - 2H_{,\ell} - 4K_\ell K_R - 2K_R^2
\geq
\left| 2H(K_\ell - K_R) - 4K_{R,\ell} \right|. \label{DEC}
\end{equation}
From equations \eqref{Ha} and \eqref{dH}, we see that properties (ii) and (iv) of the definition of $a(\ell)$ imply that $H(\ell_1) = H(\ell_2) =0$ and $H_{,\ell}(\ell_1) = H_{,\ell}(\ell_2) =0$; using in addition properties (i) and (ii) of  $K_R(\ell)$ to evaluate each side of \eqref{DEC} at $\ell_i$, $i = 1, 2$, we have
\[ \text{LHS} = \frac{2}{R^2(\ell_i)} \qquad \text{and} \qquad \text{RHS} = 0.\]
Thus since $R(\ell_i) > 0$ for $i = 1,2$, strict inequality holds in \eqref{DEC} at each $\ell_i$, which in turn implies that the dominant energy condition holds at least on $\mathcal{U}_1 \cup \mathcal{U}_2$, where each $\mathcal{U}_i$ is a neighborhood of $\ell_i$, $i=1,2$.
Next, observe that the inequality 
\begin{equation}  - 4K_\ell K_R - \left| 2HK_\ell \right|
\geq \left| 2H K_R + 4K_{R,\ell} \right|  - \frac{2}{R^2} + \frac 32 H^2 + 2H_{,\ell} + 2K_R^2 \label{DECsuff}
\end{equation}
implies \eqref{DEC}.
Because $[\ell_0, L]$ is compact and all the terms are smooth, the righthand side of \eqref{DECsuff} is uniformly bounded above on $[\ell_0, L]$, say by a constant $C$. 
Let us assume \textit{a priori} that we will choose $K_\ell$ such that $K_\ell(\ell) < 0$ on $[\ell_0, L]$.
Then in order to insure that the dominant energy condition \eqref{DEC} holds on $[\ell_0, L]$,  it is sufficient to show that
\begin{equation}  -2K_\ell ( 2K_R - \left| H  \right| ) \geq C. \label{DECsuff2}
\end{equation}
Since $2K_R -|H|$ is uniformly bounded below by a positive constant on $[\ell_0, L]\setminus(\mathcal{U}_1 \cup \mathcal{U}_2)$ by property (iv) of $K_R$, we may choose a constant $B > 0$ sufficiently large that $K_\ell(\ell) \leq - B$ implies strict inequality in \eqref{DECsuff2} on the set $[\ell_0, L]\setminus(\mathcal{U}_1 \cup \mathcal{U}_2)$.
In fact, for $\ell \in [\ell_0, L]$,  we set 
\[ \text{tr}K(\ell) \equiv -B,\]
 so that 
\[ K_\ell(\ell) := (\text{tr}K - 2K_R)(\ell) = -B - 2K_R(\ell) \leq -B
\]
(since $K_R \geq 0$ by hypothesis).
Thus with these choices of $R$, $K_R$, and $K_\ell$, inequality \eqref{DECsuff2} holds everywhere on $[\ell_0, L]$ with strict inequality, and hence so does the dominant energy condition \eqref{DEC}, also with strict inequality.

It remains to prescribe $K_R$ and $K_\ell$ (or equivalently, $K_R$ and $\text{tr}K$) on $[L, \infty)$.
To this end, we first make the ansatz
\begin{equation}
K_R(\ell) := \frac{1}{R^3(\ell)} \left[ R^3(L) K_R (L) + \int_L^\ell R_{,\ell} R^2  b (\tilde{\ell}) \, d\tilde{\ell} \right]
\end{equation}
for some unknown function $b(\ell) \in C^\infty([L, \infty))$, where the value $K_R(L) > 0$ and the function $R(\ell)$ are defined as in the preceding paragraphs.
We then choose the function $b(\ell)$ to satisfy the following properties:
\begin{enumerate}[ i. ]
  \item $b(L) = - B$
  \item all derivatives of $b$ vanish at $L$, i.e.\ $b_{,\ell}(L) = b_{,\ell\ell}(L) = b_{,\ell\ell\ell}(L) = \cdots = 0$
  \item $b(\ell) < 0$ for all $\ell \geq L$
  \item $|b(\ell)|$ decays sufficiently rapidly to insure that $K_R(\ell) \geq 0$ for all $\ell \geq L$
  \item $|b(\ell)| + \ell |b_{,\ell} (\ell)| = O(\ell^{-3})$.   
\end{enumerate}
Set
\[ \text{tr}K(\ell) := b(\ell),\]
so that
\begin{equation} K_\ell(\ell) := b(\ell) - 2K_R(\ell). \label{Kell}\end{equation}

One readily checks that, with these choices of $K_R$ and $K_\ell$, 
\begin{equation}  2H(K_\ell - K_R) - 4K_{R,\ell} \equiv 0,\label{KRode} \end{equation}
so the righthand side of the dominant energy condition inequality \eqref{DEC} is identically $0$.
Since $K_R \geq 0$ and $\text{tr}K < 0$ for $\ell \geq L$,
\[ - 4K_\ell K_R - 2K_R^2 = -4 (\text{tr}K - 2K_R)K_R - 2K_R^2  = - 4 \text{tr}K + 6 K_R^2  > 0, \]
and by \eqref{R} and property (vi) of the definition of $a(\ell)$, we also have that for $\ell \geq L$, 
\[ \frac{2}{R^2} - \frac 32 H^2 - 2H_{,\ell} = -\frac{2}{(\ell+ a)^2}\left( a_{,\ell}(a_{,\ell} + 2) + 2a_{,\ell\ell}(a+\ell)   \right) > 0. \]
Thus the lefthand side of \eqref{DEC} is strictly positive for all $\ell \geq L$, i.e.\ the dominant energy condition holds on $[L, \infty)$ as well.

Note that since $R_{,\ell}(\ell) = a_{,\ell}(\ell) + 1 > 0$, $b(\ell) < 0$, and $K_R(\ell) \geq 0$ for all $\ell \geq L$, we have
\[ |K_R(\ell)| = K_R(\ell) \leq \frac{R^3(L) K_R (L)}{R^3(\ell)} = O(\ell^{-3}). \]
Since $|\text{tr}K(\ell)| = O(\ell^{-3})$ by property (v) of $b(\ell)$, $|K_\ell(\ell)| = O(\ell^{-3})$ as well.
Then from \eqref{KRode}, property (v) of $b(\ell)$, and the fact that $|H(\ell)| = O(\ell^{-1})$, we see that $|K_{R,\ell}(\ell)|, |K_{\ell,\ell}(\ell)| = O(\ell^{-4})$.
Thus decay conditions \eqref{decay4}-\eqref{decay6} are satisfied.

There remains one complication with this initial data set: as we have prescribed it, it is not quite smooth. 
Clearly $R(\ell)$ is smooth by construction, and property (ii) of $b$ insures that $\text{tr}K$ is everywhere smooth as well, but while the functions $K_R(\ell)$ and hence $K_\ell(\ell)$ are $C^\infty$ on $[\ell_0, \infty)\setminus\{ L \}$, they are only continuous at $L$.
But now we perturb $K_R$ on a small interval $(L - \varepsilon, L)$ to smooth it out at $L$; since strict inequality holds for the dominant energy condition at $L$ by construction, by making our perturbation sufficiently small, we can insure that the dominant energy condition continues to hold.
We can also insure that $K_R$ remains strictly positive near $L$.
Holding $\text{tr}K$ fixed (i.e.\ still identically equal to $-B$), we use \eqref{Kell} to adjust the function $K_\ell$ accordingly.
Now the whole data set $(\mathcal{M} = [\ell_0, \infty), R, K_\ell, K_R)$ is everywhere smooth and satisfies the dominant energy condition.

Finally it remains to show that this data fulfills the statement of the proposition.
First observe that property (iii) of $a(\ell)$  
implies that  $R_{,\ell}(\ell) < 0 $ for $\ell \in (\ell_1, \ell_2)$ and hence that $R(\ell_1) > R(\ell_2)$.
Furthermore, since $H(\ell_1) = H(\ell_2) = K_R(\ell_1) = K_R(\ell_2) = 0$ by construction, \eqref{theta} tells us that $\theta^+(\ell_1) = \theta^+(\ell_2) = 0$ as well, so both $\ell_1$, $\ell_2$ are MOTSs. 
Our construction also implies that $\theta^+$ is strictly positive on $[\ell_0, \infty)\setminus\{ \ell_1, \ell_2 \}$, that
$\theta^+_{,\ell}(\ell_1) =  \theta^+_{,\ell}(\ell_2) = 0$, 
and that $\theta^+_{,\ell\ell}(\ell_1), \theta^+_{,\ell\ell}(\ell_2) > 0$. 
In particular, $\ell_1$ and $\ell_2$ are both finitely stable, $\ell_2$ is globally outermost, and $\ell_1$ is outermost in $[\ell_1, \ell_2)$.
Now since $\text{tr}K < 0$ for all $\ell$, Proposition 2 applies, so there exists a blow-up solution corresponding to $\ell_1$ as asserted.
\end{proof}

\section*{Acknowledgments}

The author wishes to thank Rick Schoen for a number of very helpful conversations. 

\bibliography{mybib}{}
\bibliographystyle{amsplain}

\end{document}